\newcommand{\enc}[1]{\llbracket #1 \rrbracket}
\newcommand{\handler}{\mathit{H}}
\newcommand{\N}{{\cal N}}
\newcommand{\Var}{{\cal V}}
\newcommand{\Chn}{{\cal C}}
\newcommand{\NA}{a}
\newcommand{\NB}{b}
\newcommand{\NC}{c}
\newcommand{\ND}{d}
\newcommand{\NE}{e}
\newcommand{\NX}{x}
\newcommand{\NY}{y}
\newcommand{\NZ}{z}
\newcommand{\NK}{k}
\newcommand{\NL}{l}
\newcommand{\NM}{m}
\newcommand{\NN}{n}
\newcommand{\NT}{t}
\newcommand{\NGG}{g}
\newcommand{\fn}[1]{\mathsf{fn}(#1)}
\newcommand{\fnn}[1]{\mathsf{fnn}(#1)}
\newcommand{\bn}[1]{\mathsf{bn}(#1)}
\newcommand{\n}[1]{\mathsf{n}(#1)}
\newcommand{\fo}[1]{\mathsf{fo}(#1)}
\newcommand{\role}{\rho}
\newcommand{\varx}{x}
\newcommand{\PP}{P}
\newcommand{\PQ}{Q}
\newcommand{\PR}{R}
\newcommand{\PS}{S}
\newcommand{\inact}{0}
\newcommand{\parop}{\;|\;}
\newcommand{\rest}[1]{(\nu #1)}
\newcommand{\msg}{l}
\newcommand{\send}[4]{#1!#4}
\newcommand{\receive}[4]{#1?#4}
\newcommand{\rep}[1]{! #1}
\newcommand{\match}[4]{[ #1 = #4 ]}
\newcommand{\red}{\rightarrow}
\newcommand{\lts}[1]{\xrightarrow{#1}}
\newcommand{\ltss}[2]{\xrightarrow#1{}\negthickspace^#2}
\newcommand{\subst}[2]{\{#1/#2\}}
\newtheorem{proposition}{Proposition}
\newtheorem{lemma}{Lemma}
\newtheorem{theorem}{Theorem}
\newtheorem{corollary}{Corollary}
\newtheorem{definition}{Definition}
\newenvironment{proof}[1][Proof.]{\begin{trivlist}
\item[\hskip \labelsep {\bfseries #1}]}{\end{trivlist}}
\newcommand{\rulename}[1]{\text{\small \textsc{#1}}}
\newenvironment{theorema}[2]{\begin{trivlist}
\item[\hskip \labelsep {\bfseries Theorem #1}] {#2}}{\end{trivlist}}
\newenvironment{lemmata}[2]{\begin{trivlist}
\item[\hskip \labelsep {\bfseries Lemma #1}] {#2}}{\end{trivlist}}
\newenvironment{prop}[2]{\begin{trivlist}
\item[\hskip \labelsep {\bfseries Proposition #1}] {#2}}{\end{trivlist}}
\title{The $C_\pi$-calculus: a Model for Confidential Name Passing
}
\author{
Ivan~Proki\'c
\institute{Faculty of Technical Sciences, University of Novi Sad, Serbia}
}
\begin{document}
\maketitle

\begin{abstract}
Sharing 
confidential information 
in distributed systems is a necessity in many applications, however, it
opens the problem of controlling 
information sharing even among trusted parties.
In this paper, we present a  formal model in which dissemination 
of information is disabled at the level of the syntax in a  
direct way. 
We introduce a subcalculus of the $\pi$-calculus 
%
in which channels are considered
as confidential information. 
The 
only difference 
with respect to the $\pi$-calculus is that channels 
once received cannot be forwarded  later on. 
By means of examples, we give an initial idea of how some privacy
notions already studied in the past, such as group creation 
and  name hiding, 
can be represented without any additional language constructs.
We also present an encoding of the (sum-free) $\pi$-calculus 
in our calculus. 
\end{abstract}

\section{Introduction}

Sharing sensitive information over the internet has become an everyday routine: 
sending personal data and/or credit card number for online shopping is just one of 
the examples where the sensitive information can be 
disposed to other 
parties. 
Such cases open the problem of controlling 
information sharing even among trusted parties. 
The problem of privacy can (and must) be perceived both from a legal and technological point of view. 
One of the first who explored privacy in the information age, a legal scholar, 
Alan Westin had recognized that 
``Building privacy controls into emerging technologies 
will require strong effort..."~\cite{westin2003social}.
On the other hand, new technologies can also provide new ways to deal 
with 
privacy problems~\cite{DBLP:conf/fm/TschantzW09}. 
According to Solove~\cite{solove2005taxonomy}, there are four types of privacy violation: 
invasions, information collection, information processing, and information 
dissemination (see also \cite{DBLP:journals/lmcs/KouzapasP17}). 
The focus of this paper will be on presenting the techniques for controlling 
information dissemination in distributed systems. 

Although there is a further taxonomy for information dissemination violation by Solove, 
all these sub-types roughly speak about harms of revealing the personal data or 
threats of spreading information.
In distributed systems where communication of 
entities is central, 
controlling the flow of confidential information poses some obstacles. 
The capability of forwarding, that makes it possible to disseminate received information, 
may be recognized as one 
problem in controlling such systems.

Even in the examples of well-structured communications between two 
parties, such as the ones respecting the protocols specified by session types~\cite{DBLP:conf/esop/HondaVK98}, 
it can be permissible for any party to forward (i.e., delegate) its session end-point. 
The session delegation is crucial for establishing sessions between the two parties~\cite{DBLP:journals/acta/GayH05, DBLP:journals/iandc/Vasconcelos12}, such as in 
\[
\rest{\mathit{session}}\send{channel}\role\msg{\mathit{session}}.\mathit{Alice} \quad\parop\quad \receive{\mathit{channel}}\role\mgs{\NX}.\mathit{Bob}
\]
where $\mathit{Alice}$ creates a fresh channel $\mathit{session}$ and sends one end-point along $\mathit{channel}$ to $\mathit{Bob}$. 
However, it is also possible that the receiving party forwards the channel (cf. session delegation), as we may specify 
$\mathit{Bob}=\send{\mathit{forward}}\role\msg\NX.\mathit{Bob}'$. 
Such forwarding capability may be appealing to have in some cases (e.g., forwarding tasks from a master to a slave process), but considering $\mathit{session}$ is a channel created by $\mathit{Alice}$, 
pointing to some private data 
and shared exclusively with $\mathit{Bob}$, 
one might argue that $\mathit{Bob}$ should not gain the capability of session delegation 
just by receiving $\mathit{session}$. Hence, if we consider the name of channel $\mathit{session}$ to be confidential, $\mathit{Alice}$ should be the one who decides whether to let a third party knows about the channel. 
In addition, we may argue that in some cases there is no predefined set of parties that 
may receive $\mathit{session}$ in any future. 
Indeed, $\mathit{Alice}$ should be able to send $\mathit{session}$ end-point to any other party she decides.
Therefore, we may conclude that confidential information can also be shared in open-ended systems, 
where the set of users of the information cannot be statically predefined.

In this paper, we present a  formal model in which dissemination 
of information is disabled at the level of the syntax in a direct way. 
We build on the $\pi$-calculus~\cite{pi-calculus}, a process model tailored for 
communication-centric systems, by introducing a subcalculus which we call \emph{Confidential $\pi$-calculus},
 abbreviated $C_\pi$. 
The only information shared in our calculus are names of channels, so channels are the confidential information. 
This is the only difference of our model with respect to the $\pi$-calculus, names of channels are 
confidential and hence once received cannot be forwarded  later on. 
By means of examples, the paper gives an initial idea of how some privacy
notions already studied in the past, such as group creation 
and  name hiding, 
can be represented without any additional language constructs. 
We also define the non-forwarding property and show that, naturally, all $C_\pi$ processes satisfy this property. 
This result is then reused to differentiate the $\pi$-calculus processes that never forward received channels: if a $\pi$ process is bisimilar to a $C_\pi$ process then the $\pi$ process satisfies the non-forwarding property. 
We also propose an encoding from $\pi$-calculus into $C_\pi$-calculus and show its completeness.
The paper presents initial results of the investigation of the model, 
formalization of some of the results are left for an extended version of the paper.

The paper is organized as follows. In Section~\ref{sec:process-model}, we start by presenting the syntax 
and semantics of $C_\pi$-calculus, and we state some properties of the labeled transition system.  In Section~\ref{sec:behavioral} we define 
a behavioral equivalence relation, called strong bisimilarity. 
Using the definition of strong bisimilarity we state and prove that the closed domains for channels are directly representable in $C_\pi$. 
Another consequence of this result is the possibility of creating channels with similar behavior to $CCS$ channels~\cite{DBLP:books/sp/Milner80}. 
Using the strong bisimilarity relation and non-forwarding of $C_\pi$ processes, we also propose a method for differentiating $\pi$-calculus processes that never forward received channels. In addition 
Section~\ref{sec:examples} presents some further informal insights on $C_\pi$ and several 
interesting scenarios which are naturally represented in our model. 
Even though non-forwarding property restricts the syntax of the $\pi$-calculus, in Section~\ref{sec:encoding} we 
show the $C_\pi$ is expressive enough to model the $\pi$-calculus.
The base idea of the encoding is to create dedicated processes for each channel that 
handle sending the respective channels. 
In Section~\ref{sec:related work} we conclude and point to the related work.







\section{Process Model}\label{sec:process-model}
\begin{table}[t]
\[
\displaystyle
\begin{array}[t]{@{}rcl@{\;\;}l@{}}
 \pi & ::= & \quad \send\NA\role\msg\NK \quad \parop \quad \receive\NA\role\msg{x} \quad \parop \quad \match\NA\role\msg\NB\pi\\
  \PP & ::= & \quad  \inact \quad \parop \quad \pi.\PP  \quad \parop \quad \PP\parop\PP \quad \parop \quad \rest\NK \PP  \quad \parop \quad \rep\PP \\
\end{array}
\]
\caption{\label{tab:syntax} Syntax of prefixes and processes.}
\end{table}
In this section, we present the syntax and semantics of $C_\pi$. 
The main difference with respect to the $\pi$-calculus processes is that in $C_\pi$ names received 
in an 
input cannot be later used as an object of an 
output, hence disallowing forwarding. 
Apart from this difference, the remainder of this section should come 
as no surprise to a reader familiar with the $\pi$-calculus.
To make a clear distinction between names of variables bound in input and names of channels, we introduce 
two disjoint countable sets $\Var$ and $\Chn$, where $\Var$ is the set of variables, ranged over by 
$\NX, \NY, \NZ, \ldots$, and $\Chn$ is the set of channel names, ranged over by $\NK, \NL, \NM, \ldots$. 
We denote with $\N$ the union of sets $\Var$ and $\Chn$, and we let $\NA, \NB, \NC, \ldots$ range over $\N$.


\paragraph{Syntax.} 
Table~\ref{tab:syntax} presents the syntax of the language.
An inactive process is represented with $\inact$.  
The prefixed process $\pi.\PP$ comprehends process $\send\NA\role\msg\NK.\PP$, which on name $\NA$ 
sends channel $\NK$ and then proceeds as $\PP$, 
process $\receive\NA\role\msg\NX.\PP$ which on name $\NA$ receives a channel and substitutes the received channel for $\NX$ in $\PP$, and the last prefix $\match\NA\role\msg\NB\pi.\PP$ which exhibits $\pi.\PP$ only if $\NA$ and $\NB$ are the same name.
Notice that  in our language, differently from the $\pi$-calculus, there is a syntactic 
restriction of the objects in prefixes: only a channel ($\NK$) can be sent and only a variable ($\NX$) can be used as 
a placeholder for a channel to be received. For example, $\pi$-calculus process $\receive\NA\role\msg\NX.\send\NB\role\msg\NX.\inact$ is not part of the $C_\pi$ syntax.
There is no restriction on subjects of prefixes and names to be matched, these can be either variables or channels ($\NA$).
Parallel composition $\PP \parop \PP$ stands for two processes simultaneously active, that may interact. 
Channel restriction $\rest\NK \PP$ expresses that a new channel $\NK$, known only to process $\PP$, is created. 
Replicated process $\rep\PP$ introduces an infinite behavior. 
Intuitively, consider $\rep\PP$ stands for an infinite parallel composition of copies of process $\PP$ (i.e., $\PP\parop\PP\parop \cdots$). Here, we do not consider the choice operator (i.e., the sum) since we believe it is not fundamental to our approach, but we remark choice can be added in expected lines.

In $\rest\NK \PP$ and $\receive\NA\role\msg\NX.\PP$, the channel $\NK$ and the variable $\NX$ are binding 
with scope $\PP$. 
The set of bound names $\bn\PP$, for any process $\PP$, is defined as the union of bound channels and bound variables in $\PP$. 
The set of free names $\fn\PP$ and the set of names $\n\PP$, for any $\PP$, are defined analogously.
In addition, we use $\fo\PP$ to denote the set of all free channels appearing as 
objects of output prefixes in process $\PP$.

\paragraph{Semantics.}

\begin{table}[t]
\[
\begin{array}[t]{@{}c@{}}
\inferrule[(out)]{}
{\send\NK\role\msg\NL.\PP\lts{\send\NK\role\msg\NL}\PP}
\qquad
\inferrule[(in)]{}
{\receive\NK\role\msg\varx.\PP\lts{\receive\NK\role\msg\NL}\PP\subst\NL\varx}
\qquad
\inferrule[(match)]
{\pi.\PP\lts{\alpha}\PP'}
{\match\NA\role\msg\NA\pi.\PP\lts{\alpha}\PP'}
\qquad
\inferrule[(res)]
{\PP\lts{\alpha}\PP' \quad k\notin\n\alpha}
{\rest\NK \PP \lts{\alpha}\rest\NK\PP'}
\qquad
\inferrule[(open)]
{\PP\lts{\send\NK\role\msg\NL}\PQ \quad k\not= l}
{\rest\NL\PP\lts{\rest\NL\send\NK\role\msg\NL}\PQ }
\vspace{2ex}\\
\inferrule[(par-l)]
{\PP\lts{\alpha}\PQ \quad \bn\alpha\cap\fn\PR=\emptyset}
{\PP\parop\PR\lts{\alpha}\PQ\parop\PR}
\qquad
\inferrule[(comm-l)]
{\PP\lts{\send\NK\role\msg\NL}\PP' \quad \PQ\lts{\receive\NK\role\msg\NL}\PQ'}
{\PP\parop\PQ\lts{\tau}\PP'\parop\PQ'}
\qquad
\inferrule[(close-l)]
{\PP\lts{\rest\NL\send\NK\role\msg\NL}\PP' \quad \PQ\lts{\receive\NK\role\msg\NL}\PQ' \quad \NL\notin \fn{\PQ}}
{\PP\parop\PQ\lts{\tau}(\nu \NL)(\PP'\parop\PQ')}
\vspace{2ex}\\
\inferrule[(rep-act)]
{\PP\lts{\alpha}\PP'}
{\rep\PP\lts{\alpha}\PP'\parop \rep\PP}
\qquad
\inferrule[(rep-comm)]
{\PP\lts{\send\NK\role\msg\NL}\PP' \quad \PP\lts{\receive\NK\role\msg\NL} \PP''}
{\rep\PP\lts{\tau} (\PP' \parop \PP'') \parop \rep\PP}
\qquad
\inferrule[(rep-close)]
{\PP\lts{\rest\NL\send\NK\role\msg\NL}\PP' \quad \PP\lts{\receive\NK\role\msg\NL} \PP'' \quad \NL\notin\fn\PP}
{\rep\PP\lts{\tau} \rest\NL(\PP' \parop \PP'')\parop  \rep\PP}
\end{array}
\]
\caption{\label{tab:Transition}LTS rules.}
\end{table}

We present an operational semantics for our model in terms of the labeled transition system, which build on observable labeled actions $\alpha$, defined as 
\[
\alpha ::= \quad \send\NK\role\msg\NL \quad \parop \quad \receive\NK\role\msg\NL \quad \parop \quad \rest\NL\send\NK\role\msg\NL \quad \parop \quad \tau
\]
Action $\send\NK\role\msg\NL$ sends the channel $\NL$ on the channel $\NK$, 
while $\receive\NK\role\msg\NL$ receives the channel $\NL$ on channel $\NK$. 
In action $\rest\NL\send\NK\role\msg\NL$ the sent channel $\NL$ is bound, 
and $\tau$ stands for internal action. 
Notice that, as in the $\pi$-calculus, names bound in the input (variables) cannot appear in 
labels of observable actions. To retain the same notation as for processes, we denote by 
$\fn\alpha$, $\bn\alpha$ and $\n\alpha$, the sets of free, bound and all names of observable $\alpha$, respectively. As we noted above, these sets contain only channels, and not variables.

The transition relation is defined inductively by the rules given in Table~\ref{tab:Transition}. 
Notice that the action labels and transition rules are defined exactly as in the $\pi$-calculus~\cite{pi-calculus}.
The symmetric rules for \rulename{(par-l)}, \rulename{(comm-l)} and \rulename{(close-l)} are elided from the table.
Rules \rulename{(out)}, \rulename{(in)} and \rulename{(match)} are consistent with the explanations 
of the corresponding syntactic constructs. 
Rule \rulename{(res)} ensures that the action of the process is the action of the process scoped over by channel restriction if the channel specified in restriction is not mentioned in the action.
Rule \rulename{(open)} opens the scope of the restricted channel, enabling the extrusion of its scope 
while ensuring that subject and the object of the action are different channels.
Rule \rulename{(par-l)} lifts the action of one of the branches, and the side condition ensures that 
the channel bound in the action is not specified as free in the other branch.
In rule \rulename{(comm-l)} two processes performing dual actions, one sending and other receiving 
$\NL$ along $\NK$, synchronize their actions in the respective parallel composition.
In rule \rulename{(close-l)} the channel sent ($\NL$) by the left process is bound and after the 
synchronization with the right process (performing the dual action), the scope of $\NL$ is closed
while avoiding unintended name capture.
Rules \rulename{(rep-act)}, \rulename{(rep-comm)} and \rulename{(rep-close)} describe the actions of a replicated process.
The first rule lifts the action of a single copy of the replicated process and activates $\rep\PP$ in parallel.
The second and the third rules show cases when two copies of replicated process synchronize their actions, 
either through communicating a free or bound channel, where, again, in both cases a copy of $\rep\PP$ is 
activated in parallel.
As usual, we identify $\alpha$-convertible processes, and thus, we use our transition rules up to $\alpha$-conversion when needed.

We now present some specific results of the transition relation in the $C_\pi$-calculus. 
Our first result shows the relation between the set of free channels appearing as objects of output prefixes in the process and the process redexes: this set 
can be (possibly) enlarged only by opening the scope of a bound channel. Even more, the input actions do not affect the set of free channels appearing as objects of output prefixes in the process.

\begin{lemma}\label{lemm:fo-and-transitions}
Let $\PP\lts{\alpha}\PP'$. 
\begin{enumerate}
\item If $\alpha=\receive\NK\role\msg\NL$ then $\fo{\PP'}=\fo{\PP}$.
\item If $\alpha=\send\NK\role\msg\NL$ then $\NL\in\fo\PP$ and $\fo{\PP'}\subseteq\fo{\PP}$.
\item If $\alpha=\rest\NL\send\NK\role\msg\NL$ then $\NL\in\bn\PP$ and $\fo{\PP'}\subseteq\fo{\PP}\cup\{\NL\}$.
\item If $\alpha=\tau$ then $\fo{\PP'}\subseteq\fo{\PP}$.
\end{enumerate}
\end{lemma}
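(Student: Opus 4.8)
The plan is to prove the lemma by induction on the derivation of the transition $\PP \lts{\alpha} \PP'$, with a case analysis that naturally mirrors the four clauses of the statement. The central observation driving the whole argument is the syntactic restriction peculiar to $C_\pi$: in an output prefix $\send\NA\role\msg\NK$ the object must be a channel $\NK$, never a variable, while an input $\receive\NA\role\msg\NX$ binds a \emph{variable} $\NX$. Consequently, a received name is substituted for a variable, and since variables can never appear as objects of outputs, such a substitution cannot introduce a new free channel into $\fo{\cdot}$. This is precisely why clause (1) holds with equality rather than mere inclusion, and it is the key point that distinguishes $C_\pi$ from the full $\pi$-calculus.

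First I would handle the base cases. For \rulename{(out)} on $\send\NK\role\msg\NL.\PP$, the label is $\send\NK\role\msg\NL$, the residual is $\PP$, and since $\NL$ occurs as an output object in the prefix we get $\NL \in \fo{\send\NK\role\msg\NL.\PP}$; moreover $\fo\PP \subseteq \fo{\send\NK\role\msg\NL.\PP}$ because removing the leading prefix can only discard the object $\NL$, establishing clause (2). For \rulename{(in)} on $\receive\NK\role\msg\varx.\PP$, the residual is $\PP\subst\NL\varx$; here I would invoke the substitution observation above to argue $\fo{\PP\subst\NL\varx} = \fo{\PP} = \fo{\receive\NK\role\msg\varx.\PP}$, since $\varx$ is a variable and therefore never an output object, so replacing it by $\NL$ changes no output object. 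This gives clause (1). A small auxiliary fact worth stating explicitly is: for any process $\PQ$, variable $\varx$, and channel $\NL$, $\fo{\PQ\subst\NL\varx} = \fo\PQ$.

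Next the inductive cases. Rule \rulename{(match)} is immediate since the premise and conclusion have the same residual and label. For \rulename{(res)} with $\NK \notin \n\alpha$, the $\fo$ sets of $\rest\NK\PP$ and $\rest\NK\PP'$ differ from those of $\PP$ and $\PP'$ only by possible removal of $\NK$, and the side condition keeps the relevant inclusions intact across all four label shapes. Rule \rulename{(open)} is where clause (3) originates: starting from a \rulename{(out)}-style premise $\PP \lts{\send\NK\role\msg\NL} \PQ$ under $\rest\NL$, the restricted channel $\NL$ is bound in $\rest\NL\PP$ (so $\NL \in \bn{\rest\NL\PP}$), and $\fo\PQ \subseteq \fo\PP \subseteq \fo{\rest\NL\PP} \cup \{\NL\}$ by the induction hypothesis for clause (2) together with the fact that $\rest\NL$ may only have hidden $\NL$. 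For \rulename{(par-l)}, $\fo{\PP\parop\PR} = \fo\PP \cup \fo\PR$ and the conclusion follows componentwise from the induction hypothesis applied to the active branch while $\PR$ is unchanged.

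The remaining cases \rulename{(comm-l)}, \rulename{(close-l)}, \rulename{(rep-act)}, \rulename{(rep-comm)} and \rulename{(rep-close)} all produce $\tau$ (except \rulename{(rep-act)}, which I would treat under whichever clause matches its lifted label), so they fall under clause (4). For \rulename{(comm-l)}, one premise is an output and one an input; by the induction hypotheses $\fo{\PP'} \subseteq \fo\PP$ and $\fo{\PQ'} = \fo\PQ$, whence $\fo{\PP'\parop\PQ'} \subseteq \fo{\PP\parop\PQ}$. For \rulename{(close-l)} the newly bound $\NL$ is immediately restricted in the residual $(\nu\NL)(\PP'\parop\PQ')$, so it is removed from the $\fo$ set and the inclusion into $\fo{\PP\parop\PQ}$ is preserved. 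The replication cases are structurally identical, using that $\fo{\rep\PP} = \fo\PP$ and that reactivating $\rep\PP$ in parallel adds nothing new. I expect the main obstacle to be the bookkeeping in \rulename{(close-l)} and \rulename{(rep-close)}: one must verify carefully that the scope-closing restriction genuinely removes $\NL$ from the set of free output objects and that no unintended free channel survives, which is exactly where the combination of the induction hypothesis for clause (3) and the $C_\pi$ non-forwarding discipline must be deployed together.
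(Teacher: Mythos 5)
Your proof is correct and takes essentially the same approach as the paper: induction on the derivation of the transition, with the key observation that a received channel is substituted for a \emph{variable}, which can never occur as an output object, so $\fo{\PP\subst\NL\varx}=\fo\PP$. The paper's own proof only spells out the \rulename{(open)} case (exactly as you do), leaving the remaining cases—which you verify correctly—implicit.
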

\begin{proof}
The proof is by induction on the derivation $\PP\lts{\alpha}\PP'$. 
We only discuss the base case of $3.$, when rule \rulename{(open)} is applied. 
Then, $\PP=\rest\NL\PP_1$ and $\rest\NL\PP_1\lts{\rest\NL\send\NK\role\msg\NL}\PP'$ is derived from 
$\PP_1\lts{\send\NK\role\msg\NL}\PP'$. 
By $2.$ of this Lemma, we get $\NL\in\fo{\PP_1}$ and $\fo{\PP'}\subseteq\fo{\PP_1}$. 
Since $\NL\in\bn{\rest\NL\PP_1}$, we conclude $\fo{\PP'}\subseteq\fo{\rest\NL\PP_1}\cup\{\NL\}$.
\end{proof}

As a direct consequence of Lemma~\ref{lemm:fo-and-transitions} we get the next corollary.

\begin{corollary}\label{cor:fo-and-in-out}
\begin{enumerate}
\item If $\PP\lts{\receive\NK\role\msg\NL}\PP'$ and $\NL\notin\fo\PP$ then $\NL\notin\fo{\PP'}$.
\item If $\NL\notin\fo{\PP}$ then there is no process $\PP'$ and channel $\NK$ such that $\PP\lts{\send\NK\role\msg\NL}\PP'$.
\end{enumerate}
\end{corollary}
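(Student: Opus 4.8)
The plan is to read off both statements directly from Lemma~\ref{lemm:fo-and-transitions}, matching each item of the corollary to the appropriate clause of the lemma; no fresh induction is needed, since all the inductive work has already been discharged in the proof of the lemma.

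For the first item, I would apply clause~1 of the lemma to the transition $\PP\lts{\receive\NK\role\msg\NL}\PP'$, which yields $\fo{\PP'}=\fo{\PP}$. Since the hypothesis gives $\NL\notin\fo\PP$ and the two sets coincide, we obtain $\NL\notin\fo{\PP'}$ immediately. The key observation here is that an input action leaves the set of free output-object channels completely unchanged, so membership questions about that set transfer verbatim across an input step.

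For the second item, I would argue by contraposition. Suppose, towards a contradiction, that there exist a process $\PP'$ and a channel $\NK$ with $\PP\lts{\send\NK\role\msg\NL}\PP'$. Then clause~2 of the lemma forces $\NL\in\fo\PP$, which contradicts the hypothesis $\NL\notin\fo\PP$. Hence no such output transition can exist, which is exactly the claim.

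The main obstacle is essentially non-existent: the statement is a straightforward restatement of the lemma. The only care needed is to recognise that the second item is precisely the contrapositive of the ``necessary condition'' hidden in clause~2 of the lemma, namely that emitting $\NL$ as an output object requires $\NL$ to already lie in $\fo\PP$. I would therefore keep the argument to these two short deductions rather than unfolding the transition rules again.
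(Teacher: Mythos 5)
Your proof is correct and matches the paper's intent exactly: the paper presents this corollary as a direct consequence of Lemma~\ref{lemm:fo-and-transitions}, with item~1 following from clause~1 of that lemma and item~2 being the contrapositive of clause~2, precisely as you argue.
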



What we can conclude from Corollary~\ref{cor:fo-and-in-out} combining its two statements is that a $C_\pi$ process cannot send a channel that it previously has received if the channel was not specified as an object of an output prefix in the first place. To show that this property is preserved also by all possible redexes of the process let us first relate the set of free channels appearing in output prefixes of the set and any its execution trace. The result is a direct consequence of Lemma~\ref{lemm:fo-and-transitions}.

\begin{corollary}\label{cor:trace-fo}
If $\PP\lts{\alpha_1}\PP_1\lts{\alpha_2}\ldots\lts{\alpha_m}\PP_m$ then $\fo{\PP_m}\subseteq\fo{\PP}\cup\bn{\alpha_1}\cup\ldots\cup\bn{\alpha_m}$.
\end{corollary}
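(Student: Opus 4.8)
The plan is to proceed by induction on the length $m$ of the trace, using Lemma~\ref{lemm:fo-and-transitions} to control each individual transition. First I would distill the four cases of that lemma into a single uniform single-step statement: for any transition $\PP\lts{\alpha}\PP'$ we have $\fo{\PP'}\subseteq\fo{\PP}\cup\bn{\alpha}$. This follows by inspecting each form of $\alpha$. When $\alpha$ is an input, a free output, or $\tau$, the label carries no bound name, so $\bn{\alpha}=\emptyset$, and parts $1$, $2$, and $4$ of the lemma already give $\fo{\PP'}\subseteq\fo{\PP}$. When $\alpha=\rest\NL\send\NK\role\msg\NL$ is a bound output, part $3$ gives $\fo{\PP'}\subseteq\fo{\PP}\cup\{\NL\}$, and since $\bn{\alpha}=\{\NL\}$ this is exactly $\fo{\PP'}\subseteq\fo{\PP}\cup\bn{\alpha}$.

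With this uniform single-step inequality in hand, the induction is routine. For the base case $m=0$ the trace is empty, $\PP_m=\PP$, and the claim reduces to $\fo{\PP}\subseteq\fo{\PP}$. For the inductive step I would split the trace $\PP\lts{\alpha_1}\PP_1\lts{\alpha_2}\ldots\lts{\alpha_m}\PP_m$ at its last transition. The induction hypothesis applied to the prefix of length $m-1$ yields $\fo{\PP_{m-1}}\subseteq\fo{\PP}\cup\bn{\alpha_1}\cup\ldots\cup\bn{\alpha_{m-1}}$, and the single-step inequality applied to $\PP_{m-1}\lts{\alpha_m}\PP_m$ gives $\fo{\PP_m}\subseteq\fo{\PP_{m-1}}\cup\bn{\alpha_m}$. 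Chaining the two inclusions produces the desired bound $\fo{\PP_m}\subseteq\fo{\PP}\cup\bn{\alpha_1}\cup\ldots\cup\bn{\alpha_m}$.

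There is essentially no hard part here: all the content is carried by Lemma~\ref{lemm:fo-and-transitions}, and the corollary merely accumulates its per-step effect along the trace. The one point worth stating explicitly is that input, free-output, and $\tau$ labels contribute nothing to the right-hand union precisely because their sets of bound names are empty; it is only the scope-opening transitions that can enlarge $\fo{\cdot}$, and each such step adds exactly the extruded name, which is already recorded in the corresponding $\bn{\alpha_i}$.
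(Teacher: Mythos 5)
Your proof is correct and matches the paper's intent exactly: the paper states this corollary without proof as ``a direct consequence of Lemma~\ref{lemm:fo-and-transitions}'', and your induction on the trace length, with the uniform single-step bound $\fo{\PP'}\subseteq\fo{\PP}\cup\bn{\alpha}$ extracted from the four cases of that lemma, is precisely the intended argument.
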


The next theorem states that if the channel received by a process was not previously specified as an object of an output prefix, it will not be sent in any of the process possible evolutions. Hence, for the $C_\pi$ processes 
forwarding a channel, in a sense that a process can send a channel he learns through receiving, is not possible. Before the theorem, we present the precise definition of non-forwarding. 

\begin{definition}[
Non-forwarding Property]\label{def:non-leaking-pi-processes}
A process $\PP_1$ satisfies the non-forwarding 
property if whenever 
\[
\PP_1\lts{\alpha_1}\PP_2\lts{\alpha_2}\ldots\lts{\alpha_{m}}\PP_{m+1}.
\]
then if $\NL\notin\fn{\PP_i}$ and $\alpha_i=\receive\NK\role\msg\NL$, for some $i=1, \ldots, m-1$, then $\alpha_j\not=\send{\NK'}\role\msg\NL$, for all $j=i+1, \ldots,m$.
\end{definition}

Notice that in the last definition we could also add the condition $\alpha_i\not=\rest\NL\send{\NK'}\role\msg\NL$, for all $j=i+1, \ldots,m$. But, since without loss of generality we can assume all bound outputs are fresh, we can omit such condition.
The next theorem attests that all $C_\pi$ processes respect the non-forwarding 
property, but in a more rigorous way, where the only restriction for the channel is not to be specified as the free object of any output prefix. 

\begin{theorem}[The $C_\pi$ Processes Respect the Non-Forwarding 
Property]\label{the:non-forwarding}
Let
\[
\PP_1\lts{\alpha_1}\PP_2\lts{\alpha_2}\ldots\lts{\alpha_{m}}\PP_{m+1}.
\]
Then, if $\NL\notin\fo{\PP_i}$ and $\alpha_i=\receive\NK\role\msg\NL$, for some $i=1, \ldots, m-1$, then $\alpha_j\not=\send{\NK'}\role\msg\NL$, for all $j=i+1, \ldots,m$.
\end{theorem}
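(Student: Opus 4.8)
The theorem says: along any trace, if at step $i$ a channel $\NL$ is received that is not a free output object of $\PP_i$, then no later step $j>i$ can output $\NL$ (as a free object). The natural approach is to isolate the crucial invariant already established in Corollary~\ref{cor:trace-fo}, namely that the set $\fo{\cdot}$ of free output objects can only grow by accumulating bound names extruded along the trace. The key observation to exploit is Corollary~\ref{cor:fo-and-in-out}(1), which says an input of $\NL$ with $\NL\notin\fo\PP$ keeps $\NL$ outside $\fo{\PP'}$, together with Corollary~\ref{cor:fo-and-in-out}(2), which says a channel not in $\fo{\PP}$ can never be the object of an output transition.

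First I would fix the hypotheses: suppose $\NL\notin\fo{\PP_i}$ and $\alpha_i=\receive\NK\role\msg\NL$, and aim to show $\fo{\PP_{j}}$ does not contain $\NL$ for any $j$ with $i<j\le m$, from which the conclusion follows by Corollary~\ref{cor:fo-and-in-out}(2) applied at each such step. By Corollary~\ref{cor:fo-and-in-out}(1) applied to the transition $\PP_i\lts{\alpha_i}\PP_{i+1}$, we immediately get $\NL\notin\fo{\PP_{i+1}}$. The heart of the argument is then to propagate this from $\PP_{i+1}$ down the remaining trace $\PP_{i+1}\lts{\alpha_{i+1}}\cdots\lts{\alpha_m}\PP_{m+1}$, which I would do by induction on $j-(i+1)$, invoking Lemma~\ref{lemm:fo-and-transitions} at each step: cases $1$, $2$, and $4$ give $\fo{\PP_{j+1}}\subseteq\fo{\PP_j}$ (for input the containment is even equality), so $\NL\notin\fo{\PP_j}$ yields $\NL\notin\fo{\PP_{j+1}}$ directly; the only delicate case is $3$, a bound output, where $\fo{\PP_{j+1}}\subseteq\fo{\PP_j}\cup\{\NL'\}$ for the extruded channel $\NL'$.

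The main obstacle is precisely this scope-extrusion case: I must argue that the freshly extruded $\NL'$ cannot coincide with our $\NL$. Here is where the remark following Definition~\ref{def:non-leaking-pi-processes} does real work: since $\NL'\in\bn{\PP_j}$ is a bound channel being opened, and we may assume without loss of generality that all bound outputs along the trace are taken fresh (by $\alpha$-conversion, which the semantics is defined up to), the extruded name $\NL'$ is distinct from the previously-known channel $\NL$. Thus $\NL\neq\NL'$ and the containment $\fo{\PP_{j+1}}\subseteq\fo{\PP_j}\cup\{\NL'\}$ still leaves $\NL\notin\fo{\PP_{j+1}}$, closing the induction.

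Having established $\NL\notin\fo{\PP_j}$ for every $j$ with $i<j\le m$, the conclusion is immediate: if some $\alpha_j=\send{\NK'}\role\msg\NL$ with $i<j\le m$, then the transition $\PP_j\lts{\alpha_j}\PP_{j+1}$ is an output of $\NL$, contradicting Corollary~\ref{cor:fo-and-in-out}(2) since $\NL\notin\fo{\PP_j}$. Hence no such $j$ exists, which is exactly the non-forwarding property. The whole argument is really a packaging of the two corollaries plus the freshness convention; the only step demanding care is ruling out the accidental clash of the extruded bound name with the received channel, and I would state that freshness assumption explicitly up front to keep the extrusion case clean.
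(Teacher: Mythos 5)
Your proof is correct and follows essentially the same route as the paper: the paper simply invokes Corollary~\ref{cor:trace-fo} (together with the freshness assumption on bound outputs) to get $\NL\notin\fo{\PP_j}$ for all $j>i$, and then concludes with Corollary~\ref{cor:fo-and-in-out}(2), exactly as you do. The only difference is presentational -- you unfold Corollary~\ref{cor:trace-fo} into an explicit step-by-step induction on Lemma~\ref{lemm:fo-and-transitions}, which is precisely the argument that corollary packages.
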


\begin{proof}
Since without loss of generality we can assume all bound outputs are fresh
 and $\NL\notin\fo{\PP_i}$, using Corollary~\ref{cor:trace-fo} we get $\NL\notin\fo{\PP_j}$, for $j=i+1, \ldots,m+1$. Hence, by Corollary~\ref{cor:fo-and-in-out}~$2.$ we get $\alpha_j\not=\send{\NK'}\role\msg\NL$, for $j=i+1, \ldots,m$.
\end{proof}

The result of Theorem~\ref{the:non-forwarding} should come as no surprise, $C_\pi$ processes are designed to respect the non-forwarding property. However, considering the $\pi$-calculus, it appears to be nontrivial to differentiate processes that respect the non-forwarding property. To attack this goal, we will see in the next section that Theorem~\ref{the:non-forwarding} can be reused in Proposition~\ref{prop:non-forwarding-of-pi-processes}. 

\section{Behavioral equivalence}\label{sec:behavioral}
Based on the notion of observable actions, introduced in Section~\ref{sec:process-model}, we investigate some specific behavioral 
identities of our model. 
To this end, we introduce a behavioral equivalence, called strong bisimulation, which, colloquially speaking, relates two processes if one can play a symmetric game over them: each action of one process can be mimicked 
by the other (and with the order reversed), leading to two processes that are again related. 
The relation that we are interested in is the largest such relation, called strong bisimilarity.

\begin{definition}[Strong bisimilarity]\label{def:strong-bisimilarity}
The largest symmetric binary relation over processes $\sim$, satisfying 
\[
\mbox{if}\;\; \PP\sim\PQ \;\;\mbox{and}\;\; \PP\lts{\alpha}\PP', \;\;\mbox{where}\;\; \bn{\alpha}\cap\fn{\PQ}=\emptyset, \;\;\mbox{then}\;\; \PQ\lts{\alpha}\PQ' \;\;\mbox{and}\;\; \PP'\sim\PQ',
\]
for some process $\PQ'$, is called strong bisimilarity.
\end{definition}

Notice that, since our transition rules match those of the $\pi$-calculus, our strong bisimilarity relation is precisely one of the $\pi$-calculus~\cite{pi-calculus}, restricted to the $C_\pi$ processes.  One consequence of the non-forwarding property of our calculus is the possibility of the creation of closed domains for channels.
A property, resembling the creation of a secure channel, which scope is statically determined, can be formally stated using 
the definition of strong bisimilarity.

\begin{proposition}[Closed Domains for Channels]\label{prop:behavioural}
For any process $\PP$, channel $\NM$ and prefix $\pi$, the following equality holds

\[
\rest\NK
( (\rest\NL\send\NK\role\msg\NL.\receive\NM\role\msg\NY.\match\NY\role\msg\NL\pi.\inact) \parop \receive\NK\role\msg\NX.\PP) 
\sim 
\rest\NK
( (\rest\NL\send\NK\role\msg\NL.\receive\NM\role\msg\NY.\inact) \parop \receive\NK\role\msg\NX.\PP)
\]

\end{proposition}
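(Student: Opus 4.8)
The plan is to exhibit a single strong bisimulation $\mathcal{R}$ relating the two processes, the crux being a non-forwarding argument that a guarded summand on the left-hand side is dead code. Write $P' = \PP\subst\NL\NX$, write $T$ for $\receive\NM\role\msg\NY.\match\NY\role\msg\NL\pi.\inact$ and $T'$ for $\receive\NM\role\msg\NY.\inact$. First I would observe that, because $\NK$ is restricted, neither the bound output $\rest\NL\send\NK\role\msg\NL$ of the left branch nor the input $\receive\NK\role\msg\NX$ of the right branch can cross $\rest\NK$ (rule \rulename{(res)} forbids any action mentioning $\NK$). Hence each side has a unique initial transition, the internal synchronisation produced by \rulename{(close-l)}:
\[
\rest\NK((\rest\NL\send\NK\role\msg\NL.T) \parop \receive\NK\role\msg\NX.\PP) \lts{\tau} \rest\NK\rest\NL(T \parop P'),
\]
and symmetrically the right-hand process steps to $\rest\NK\rest\NL(T' \parop P')$. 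After this forced step the two derivatives coincide except that $T$ carries the guarded continuation $\match\NY\role\msg\NL\pi.\inact$ where $T'$ has $\inact$; everything reduces to showing that this guard never fires.

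The key invariant is that $\NL$ can never be transmitted on $\NM$ by the right component, so the channel bound to $\NY$ is always distinct from $\NL$ and the match $\match\NY\role\msg\NL$ stays blocked. Since $\NL$ is bound I may assume $\NL \notin \fn\PP$ by $\alpha$-conversion; moreover the input-bound variable $\NX$ cannot occur as the object of an output in $\PP$ (this is exactly the $C_\pi$ syntactic restriction), so the substitution $\subst\NL\NX$ creates no output object $\NL$, giving $\NL \notin \fo{P'}$. By Corollary~\ref{cor:trace-fo} every derivative $P''$ reachable from $P'$ satisfies $\NL \notin \fo{P''}$, because the only names that can enter $\fo$ along a trace are fresh bound-output names, all distinct from $\NL$. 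Corollary~\ref{cor:fo-and-in-out}.2 then says no such $P''$ can perform $\send\NM\role\msg\NL$. As $\NL$ also remains under $\rest\NL$, the environment cannot supply it either (again \rulename{(res)}). Consequently, whenever the left branch consumes its input on $\NM$ — by synchronising with an output of $P''$ or by an external input — the received channel $v$ satisfies $v \neq \NL$, so $\match{v}\role\msg\NL\pi.\inact$ has no transition, exactly like $\inact$.

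With this invariant I would take $\mathcal{R}$ to be the symmetric closure of the set containing the original pair together with all pairs $(\rest\NK\rest\NL(A \parop S), \rest\NK\rest\NL(A' \parop S))$ with $\NL \notin \fo{S}$ and $(A,A')$ either $(T,T')$ or $(\match{v}\role\msg\NL\pi.\inact, \inact)$ for some $v \neq \NL$, together with the analogous pairs in which $\rest\NK$ has been opened (note $\rest\NL$ can never be opened, precisely because $\NL \notin \fo{S}$). Verifying the transfer property splits into three cases, and by the invariant both processes have identical transition capabilities, so matching is symmetric: a move of the shared component $S$ is mirrored verbatim on both sides, keeping us in $\mathcal{R}$; an input on $\NM$ is answered with the same label, turning a $(T,T')$ pair into a $(\match{v}\role\msg\NL\pi.\inact,\inact)$ pair with $v \neq \NL$; and from such a pair the left component is inert on both sides, so only $S$ acts. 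I expect the conceptual content to lie entirely in the previous paragraph — the reuse of Corollaries~\ref{cor:trace-fo} and~\ref{cor:fo-and-in-out} to forbid $P'$ from ever leaking the fresh $\NL$ back onto $\NM$ — while the main obstacle is bookkeeping: keeping $\mathcal{R}$ closed under the moves of $S$ when $S$ extrudes $\NK$ (so that the enclosing restriction is opened and the structural shape of the related terms changes), and treating uniformly the internal and external origins of the input on $\NM$.
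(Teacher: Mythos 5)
Your proposal is correct and takes essentially the same route as the paper's own proof: an explicit bisimulation built from the forced initial $\tau$-step, the invariant $\NL\notin\fo{S}$ on the shared component (maintained via Lemma~\ref{lemm:fo-and-transitions}, Corollaries~\ref{cor:trace-fo} and~\ref{cor:fo-and-in-out}), and the observation that $\match{\NN}\role\msg{\NL}\pi.\inact$ with $\NN\neq\NL$ is inert exactly like $\inact$. The only difference is the bookkeeping you yourself flag: the paper closes its relation by explicitly adding pairs under an extra restriction $\rest\NN$ (arising from \rulename{(close-r)} synchronizations on $\NM$) alongside the pairs with $\rest\NK$ opened, which your schema as literally stated would need to include.
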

\begin{proof}
The proof follows by coinduction on the definition of the strong bisimulation (see Appendix~\ref{app:proof1}).
\end{proof}

In both processes in Proposition~\ref{prop:behavioural} the left thread creates a new channel $\NL$ and sends it over a (private) channel $\NK$ to the right thread. 
The equality states that then the channel $\NL$ cannot be received afterward in the left thread. 
This is due to the fact that the right thread cannot forward received channels. 
We may notice that both processes in the proposition define the final scope for channel $\NL$, hence, determining a closed domain for the channel. 
The interpretation of this proposition can be twofold. 
On one hand, the right thread after receiving a fresh channel ($\NL$) cannot send the received channel, 
since it respects the non-forwarding 
property.
On the other hand, 
the left thread sends the channel $\NL$ (to the right thread) only once 
and the channel afterward behaves ``statically", since then it cannot be exchanged even between any two sub-processes of process $\PP\subst{\NL}{\NX}$. 
Further explanations are given in the next section.

\paragraph{The $\pi$-calculus processes that do not forward 
names.} The $C_\pi$ processes satisfy our non-forwarding 
property (Definition~\ref{def:non-leaking-pi-processes}): if the received channel is new to the process it will not be sent later on.  
Generally, the $\pi$-calculus processes do not meet the non-forwarding property.
However, we may notice that there are some $\pi$-calculus processes which are not part of the $C_\pi$ syntax but still respect this property. 
For example, consider the $\pi$-calculus process 
\[
\receive\NK\role\msg\NX.\rest\NL(\send\NL\role\msg\NX.\inact \parop \receive\NL\role\msg\NY.\inact)
\]
where any received channel along $\NK$ is then sent on $\NL$, but since $\NL$ is restricted, the process will not output the received channel. But the condition that the channel along which the forwarding is performed (here $\NL$) is restricted is not enough. For example, the $\pi$-calculus process 
$\receive\NK\role\msg\NX.\rest\NL(\send\NK\role\msg\NL.\send\NL\role\msg\NX.\inact \parop \receive\NL\role\msg\NY.\inact)$ does not satisfy the non-forwarding 
property.

This hints that differentiating $\pi$-calculus processes that do not forward 
received channels, in any of their possible evolutions, may not be a simple task.
As one solution to the problem we propose the next result which states that a $\pi$-calculus process $\PP$, and any of its possible evolutions, do not forward 
received channels if one can find a $C_\pi$ process $\PQ$, such that $\PP$ and $\PQ$ are bisimilar. 
In the theorem, we refer to the non-forwarding 
property of Definition~\ref{def:non-leaking-pi-processes}, extended to consider all $\pi$-calculus processes.
Naturally, the result of the next theorem refers to sum-free $\pi$-calculus processes, since in this paper we are not considering the sum operator in the $C_\pi$.

\begin{proposition}[The $\pi$-calculus Processes That do not Forward 
Names]\label{prop:non-forwarding-of-pi-processes}
Let $\PP$ be a 
$\pi$-calculus process. If there is a $C_\pi$ process $\PQ$, such that $\PP\sim\PQ$, then $\PP$ satisfies the non-forwarding 
property.

\end{proposition}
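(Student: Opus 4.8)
The plan is to argue by contradiction and reduce to Theorem~\ref{the:non-forwarding}, which already guarantees the non-forwarding property for $C_\pi$ processes. So suppose $\PP\sim\PQ$ with $\PQ$ a $C_\pi$ process, and suppose toward a contradiction that $\PP$ violates Definition~\ref{def:non-leaking-pi-processes}: there is a trace $\PP=\PP_1\lts{\alpha_1}\ldots\lts{\alpha_m}\PP_{m+1}$, indices $i<j$, channels $\NK,\NK'$ and a channel $\NL$ with $\NL\notin\fn{\PP_i}$, $\alpha_i=\receive\NK\role\msg\NL$ and $\alpha_j=\send{\NK'}\role\msg\NL$. The strategy is to transport this trace across $\sim$ to a trace of $\PQ$ carrying the \emph{same} labels, and then to invoke Theorem~\ref{the:non-forwarding} on the $\PQ$-side. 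The only delicate point is that the theorem needs $\NL\notin\fo{\PQ_i}$, while we merely know $\NL\notin\fn{\PP_i}$; and since strong bisimilarity does not preserve free names (a never-firing guarded output can keep $\NL$ in $\fo{\PQ_i}$ although $\NL$ is absent from $\PP_i$), one cannot simply read $\NL\notin\fn{\PQ_i}$ off $\PP_i\sim\PQ_i$.

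To remove this obstacle I would first exploit that $\NL$ is, by $\NL\notin\fn{\PP_i}$, genuinely received from the environment at step $i$. The labelled transition system is equivariant under injective renaming of channels, so for any fresh channel $\NL'$ we may rename $\NL$ to $\NL'$ throughout the trace from step $i$ onward: the prefix up to $\PP_i$ is untouched, the transition $\PP_i\lts{\receive\NK\role\msg{\NL'}}\PP_{i+1}'$ is legitimate precisely because $\NL\notin\fn{\PP_i}$, and the renamed suffix is again a genuine trace of $\PP$. Choosing $\NL'$ outside $\fn{\PQ}$ and outside all names occurring in the original trace, I obtain a fresh violation with $\alpha_i=\receive\NK\role\msg{\NL'}$, $\alpha_j=\send{\NK'}\role\msg{\NL'}$ and $\NL'$ fresh for $\PQ$.

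Next I would match this renamed trace against $\PQ$ using Definition~\ref{def:strong-bisimilarity} step by step, assuming without loss of generality that all bound outputs use fresh names so that each side condition $\bn{\alpha_k}\cap\fn{\PQ_k}=\emptyset$ is met; this yields $\PQ=\PQ_1\lts{\alpha_1}\ldots\lts{\alpha_m}\PQ_{m+1}$ with the identical labels and $\PP_k'\sim\PQ_k$. Since $\NL'\notin\fn{\PQ_1}$ and none of the labels $\alpha_1,\ldots,\alpha_{i-1}$ mentions $\NL'$, the standard monotonicity of free names along transitions ($\fn{\PQ_{k+1}}\subseteq\fn{\PQ_k}\cup\n{\alpha_k}$) gives $\NL'\notin\fn{\PQ_i}$, whence $\NL'\notin\fo{\PQ_i}$ because $\fo{\PQ_i}\subseteq\fn{\PQ_i}$. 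Now Theorem~\ref{the:non-forwarding}, applied to the $\PQ$-trace with this $\NL'$ and index $i$, forces $\alpha_j\neq\send{\NK'}\role\msg{\NL'}$ for all $j>i$, contradicting $\alpha_j=\send{\NK'}\role\msg{\NL'}$. Hence no violating trace of $\PP$ exists and $\PP$ satisfies the non-forwarding property.

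I expect the renaming step to be the crux: the tempting shortcut of inferring $\NL\notin\fn{\PQ_i}$ directly from $\NL\notin\fn{\PP_i}$ is unsound, and it is only because the forwarded name is received \emph{fresh} at step $i$ that equivariance lets us relocate it outside $\fn{\PQ}$ and so satisfy the hypothesis $\NL'\notin\fo{\PQ_i}$ of Theorem~\ref{the:non-forwarding}. The remaining ingredients --- the label-preserving simulation of the trace and the free-name monotonicity of the $\pi$-calculus LTS --- are routine.
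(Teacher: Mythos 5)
Your proof is correct and follows essentially the same route as the paper's: argue by contradiction, transport the violating trace across $\sim$, and invoke Theorem~\ref{the:non-forwarding}, using an injective renaming of the received name to a fresh $\NL'$ (legitimate precisely because $\NL\notin\fn{\PP_i}$) so that $\NL'\notin\fo{\PQ_i}$ can be guaranteed on the $C_\pi$ side. The only difference is cosmetic: the paper first case-splits on whether $\NL\in\fn{\PQ_i}$ and renames only in the second case, whereas you rename unconditionally and make the free-name monotonicity step $\fn{\PQ_{k+1}}\subseteq\fn{\PQ_k}\cup\n{\alpha_k}$ explicit, which subsumes both of the paper's cases.
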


\begin{proof}
The proof is derived to Appendix~\ref{app:proof1}.
\end{proof}

Although the result of Proposition~\ref{prop:non-forwarding-of-pi-processes} is only of the existential nature, we believe it is a step towards more practical results. One such result might be proving that for a given $\pi$-calculus process $\PP$ one can derive a $C_\pi$ process $\PQ$ such that if $\PP\sim\PQ$ then $\PP$ respect the non-forwarding property. There we can also use a relaxed definition of the non-forwarding property, in which processes do not forward names received along some predefined set of channels.  We leave such investigations for future work.




\section{Examples}\label{sec:examples}

In this section, we further investigate some interesting scenarios representable in $C_\pi$. 
Since a process in our calculus can learn new names but cannot gain the capability to send such names, 
we may distinguish two levels of channel ownership of a process that are invariant to the process evolution: 
\begin{itemize}
\item [-] \emph{administrator}: the process that creates the channel, it
has all capabilities over the channel;
\item [-] \emph{user}: the process that learns the channel name through communication (scope extrusion) 
and can communicate along the channel but cannot send it.
\end{itemize}

Hence, all administrators are also users but the conversely is not true. 
Also, notice that any process that receives a channel can become a user for that channel (but not administrator), and, hence, all processes may be considered as potential users for any channel. 
If we consider modelling our opening example in $C_\pi$ calculus 
\[
\rest{\mathit{session}}\send{channel}\role\msg{\mathit{session}}.\mathit{Alice} \quad\parop\quad \receive{\mathit{channel}}\role\mgs{\NX}.\mathit{Bob} \quad\parop\quad \mathit{Carol}
\]
$\mathit{Alice}$ is the administrator for $\mathit{session}$ and $\mathit{Bob}$ becomes a user after the reception. 
In $C_\pi$ it is not possible for $\mathit{Bob}$ afterward to send $\mathit{session}$ to a third party (e.g., to $\mathit{Carol}$). If $\mathit{Bob}$ wants $\mathit{Carol}$ to get the access to channel $\mathit{session}$ he can only tell $\mathit{Alice}$ and let her decide whether she wants to send $\mathit{session}$ to $\mathit{Carol}$ or not. 
Therefore, we can have 
\begin{itemize}
\item $\mathit{Bob}=\send{\mathit{channel}}\role\msg{\mathit{carol}}.\inact$, where $\mathit{Bob}$ sends to $\mathit{Alice}$ channel $\mathit{carol}$, and then terminates;
\item $\mathit{Alice}=\receive{\mathit{channel}}\role\msg\NY.\send{\NY}\role\msg{\mathit{session}}.\mathit{Alice'}$, where $\mathit{Alice}$ receives the channel from $\mathit{Bob}$ and decides to send $\mathit{session}$ along the received channel, and
\item $\mathit{Carol}=\receive{\mathit{carol}}\role\msg\NX.\mathit{Carol'}$, where $\mathit{Carol}$ can finally  receive $\mathit{session}$ along channel $\mathit{carol}$.
\end{itemize} 
We remark this simple example relies on the purely concurrent setting, here $\mathit{session}$ can be held by three or more parties at the same time. The correlation of the $C_\pi$ and the linearity of session types would need further investigation.

\subsection{Authentication}\label{sec:authentication}
In the $C_\pi$-calculus specifying $\rest\NL\PP$ means $\PP$ is the administrator for channel $\NL$. Process $\PP$ can extrude the scope of $\NL$ by sending, but none of the receiving processes will ever become administrators for the received channel $\NL$, since they will never gain the capability for sending $\NL$. 
Since sending capability cannot be transferred to other processes, we may conclude the administrator property over some channel can be used as an authentication over
processes. 
For example, a process $\PQ$ that previously has received $\NL$ may check at any point if the other party with whom he is communicating at the moment over $\NL$ is actually an administrator or only a user for channel $\NL$. This can be done by specifying 
\[
\PQ=\rest\NN\send\NL\role\msg\NN.\receive\NN\role\msg\NX.\match\NX\role\msg\NL.\PQ',
\]
 where first the private session with other party listening on $\NL$ is established by sending a fresh channel $\NN$, and then along $\NN$ a channel is expected to be received. If the channel received is $\NL$, then the other party has proved to be an administrator for $\NL$. Notice that $\PQ$ itself does not have to be an administrator for $\NL$ in this example.

As another example, consider that the above process $\PP$ has two threads running in parallel
\[
\send\NK\role\msg\NL.\receive\NK\role\msg\NY.\match\NY\role\msg\NL\pi.\PP_1
\quad\parop\quad
\receive\NK\role\msg\NX.\match\NX\role\msg\NL\send\NK\role\msg\NL.\PP_2
\]
where, before activating $\pi.\PP_1$ and $\PP_2$ and their possible interactions, 
both threads test whether the other one is an administrator for channel $\NL$. 
Namely, after the first synchronization, the left thread matches the received channel with 
$\NL$, i.e., we obtain configuration 
$\receive\NK\role\msg\NY.\match\NY\role\msg\NL\pi.\PP_1 
\parop\match\NL\role\msg\NL\send\NK\role\msg\NL.\PP_2\subst\NL\NX$.
If the channel received is $\NL$, then the right thread concludes that the left thread is 
an administrator for $\NL$ and sends the same channel back. 
Then, the left thread also matches the received channel with $\NL$ and continues only if the two names match, 
leading to
$\match\NL\role\msg\NL\pi.\PP_1\subst\NL\NY 
\parop\PP_2\subst\NL\NX$.
After  that, both threads have proved to be administrators for $\NL$, meaning that both have proved that they originate from process $\PP$.

\subsection{Modelling groups and name hiding}\label{sec:groups}

Controlling name sharing in the $\pi$-calculus has been investigated in past and several process models are proposed to this end. In~\cite{cardelli05}, on the $\pi$-calculus syntax, an additional construct is introduced, called group creation, and a typing discipline is developed. The intention of the group construct is to restrict communications: channels specified to be in a group cannot be communicated outside the scope of the corresponding group construct. Hence, the group creation closes the domain for channels specified in the construct. In~\cite{giunti12}, on the $\pi$-calculus syntax, an additional construct hide is introduced. Construct hide has similar properties to channel restriction, but it is more static since it forbids the channel extrusion for the channel specified to be hidden. Again, roughly speaking, hide construct closes the domain for a channel specified in the construct. In what follows, we try to represent similar behaviors without any additional language constructs, but directly in the $C_\pi$ calculus (and hence, directly in the $\pi$-calculus). Formalization of the relationship between the mentioned models and the $C_\pi$ is left for future work.

As one way to represent closed domain for a channel $\NL$ in the $C_\pi$ we may consider process
%
\[
\rest\NK 
( \rest\NL\send\NK\role\msg\NL.\inact \parop \receive\NK\role\msg\NX.\PP)
\]
resembling the one from Proposition~\ref{prop:behavioural}, where the left thread creates the channel $\NL$, sends it to the right thread and then terminates. 
The right thread receives the channel, after which the two actions synchronize and the starting process 
silently evolves to $\rest\NK \rest\NL(\inact \parop \PP\subst\NL\NX)$. 
As we have shown in Proposition~\ref{prop:behavioural},  process (and any its subprocess) $\PP\subst\NL\NX$ cannot perform output with object $\NL$ 
since channel $\NL$ was not originally created by process $\PP$. 
For this property we can state that name $\NL$ will never leak out of the scope of $\PP$, hence that all 
communications along channel $\NL$ are private to process $\PP$. 
Also, we may observe that channel $\NL$ in process $\rest\NK \rest\NL(\inact \parop \PP\subst\NL\NX)$ 
has a 
static behavior, similar to $CCS$ channels~\cite{DBLP:books/sp/Milner80}.

This constellation indeed resembles the group creation of the 
$\pi$-calculus with groups and name hiding. 
The similarity is that in the $\pi$-calculus with groups, a channel declared as 
a member of a group cannot be acquired as 
a result of communication by the process outside the scope of the group. 
The major difference is that in our example the channel behaves as a $CCS$-like channel, 
i.e., the channel cannot be acquired as a result of communication by any process. 
This brings us to our next example, that combines channel declaration and authentication, 
presented in Section~\ref{sec:authentication}. 
Consider that for a given channel, we want to statically determine a boundary for the possible channel 
extrusion, the part of the process we shall call a group. 
In that case, we may conclude that only members of the group should be 
able to receive the given channel. 
As a concrete example consider process
\[
\rest{\NGG_\NL} (\rest\NL\PP \parop \PQ)
\]
where by $\NGG_\NL$ we denote that the scope of channel $\NGG_\NL$ determines the group for channel $\NL$. 
Now, to make sure that channel $\NL$, whose administrator is process $\PP$, is sent only 
to processes scoped over with $\NGG_\NL$, before each sending of channel $\NL$, 
we must make sure that the receiver is an administrator for channel $\NGG_\NL$. 
Hence, instead of construct $\send\NK\role\msg\NL$ in $\PP$, we would use 
\[
\rest\NN \send\NK\role\msg\NN. \receive\NN\role\msg\NX. \match\NX\role\msg{\NGG_\NL} \send\NN\role\msg\NL
\]
where first a private session with the process willing to receive $\NL$ 
is established through channel $\NN$, and then the
channel received on $\NN$ is matched with $\NGG_\NL$. 
Only if the name received on $\NN$ is $\NGG_\NL$, i.e., after the other process has proved 
that he is a member of the group, channel $\NL$ is sent. Notice that if the other process can send $\NGG_\NL$ this  means the process originates either from $\PP$ or from $\PQ$.

\subsection{Open-ended groups}\label{sec:open-ended-groups}

Groups described in the previous example provide an interesting framework to investigate 
sharing protected resources in distributed environments but has the limitation that a 
group, once created, always has a fixed scope, which sometimes might be considered too restrictive. 
In $C_\pi$ 
open-ended groups are directly modeled, since sending a resource in $C_\pi$ does not transmit 
the capability for its further dissemination. 
For example, in $\rest{\mathit{group}}\send\NK\role\msg\mathit{group}.\PP$ the administrator of $\mathit{group}$, that is the process that creates the group, can send the name of the group to other processes, 
while the receiving process only becomes a user of the group and does not gain the capability to 
invite new members to the group.

\section{Encoding Uncontrolled Name Passing}\label{sec:encoding}

In this section, we show how to model forwarding in $C_\pi$, 
as in the standard $\pi$-calculus. 
We start by presenting the basic idea, which we later formalize by means of an encoding.

Throughout this 
section, we use the polyadic version of our calculus. 
This enables us to formalize our ideas in a more crisp way, but, as in the $\pi$-calculus, 
each polyadic communication 
%
%
can be represented by a sequence of monadic ones.
%
Furthermore, we will use poliadicity in a controlled way, so that we do not introduce a non-well sorted communications~\cite{DBLP:conf/concur/Milner92}.

In the $\pi$-calculus, there is no syntactic restriction on names that can appear as 
objects of output prefixes, hence a process like 
$
\receive\NK\role\msg\NX.\send\NGG\role\msg\NX.\PP  \parop \send\NK\role\msg\NL.\PQ
$
%
may be specified.
One way to represent this process in $C_\pi$ is to:
\begin{itemize}
\item create a special process dedicated for (repeatedly) sending channel $\NL$, called \emph{handler} of the channel, 
\item while sending channel $\NL$ also send a channel dedicated for communicating with the handler, and 
\item bypass sending of the received channel  to the handler process.
\end{itemize}  
Hence, we may try to represent the $\pi$-calculus process introduced above as
\[
 \receive\NK\role\msg{(\NX_1, \NX_2)}.\send{\NX_2}\role\msg\NGG.\PP  
 \parop \send\NK\role\msg{(\NL,\NM_\NL)}.\PQ
 \parop \receive{\NM_\NL}\role\msg\NY.\send\NY\role\msg{(\NL,\NM_\NL)}.\inact  
\]
where the process in the middle now sends $\NL$ together with channel $\NM_\NL$ dedicated for 
communicating with the handler process 
(the rightmost one), and the leftmost process receives both channels and instead of sending 
$\NL$ along $\NGG$ it sends $\NGG$ to the handler along $\NM_\NL$. 
The handler process receives $\NGG$ 
and sends $\NL$ (again, together with $\NM_\NL$) along the received channel, in such way mimicking forwarding.
Such representation does not work in the case when the leftmost process is not an administrator 
for channel $\NGG$ (e.g., assume process $\receive\NK\role\msg\NX.\send\NGG\role\msg\NX.\PP$ is derived from $\receive\NK\role\msg\NY.\receive\NK\role\msg\NX.\send\NY\role\msg\NX.\PP$), since then the process cannot send $\NGG$ to the handler process. 
To this end, we must refine our representation of forwarding to support situations 
when processes are potentially not administrators for any given channel. 
Thus, we introduce another type of handler processes which are in charge of forwarding channels 
that are subjects of output actions. 
For example, the starting $\pi$-calculus process would be represented as
\[
 \receive\NK\role\msg{(\NX_1, \NX_2)}.\rest\NE \send{\NN_{\NGG}}\role\msg\NE.\send{\NX_2}\role\msg\NE.\PP  
 \parop \send\NK\role\msg{(\NL, \NM_{\NL})}.\PQ  
 \parop \receive{\NM_{\NL}}\role\msg\NY.\receive\NY\role\msg\NZ.\send\NZ\role\msg{(\NL,\NM_{\NL} )}.\inact
 \parop \receive{\NN_{\NGG}}\role\msg\NY.\send\NY\role\msg{\NGG}.\inact
\]
where we added the rightmost thread, which is the handler process of channel $\NGG$, used to bypass 
sending of channel $\NGG$ in the leftmost thread. 
Now the communication in this process goes as follows: first, the two leftmost threads synchronize on channel $\NK$ (as in previous examples), leading to 
\[
 \rest\NE \send{\NN_{\NGG}}\role\msg\NE.\send{\NM_{\NL}}\role\msg\NE.\PP
 \parop \PQ  
 \parop \receive{\NM_{\NL}}\role\msg\NY.\receive\NY\role\msg\NZ.\send\NZ\role\msg{(\NL,\NM_{\NL} )}.\inact
 \parop \receive{\NN_{\NGG}}\role\msg\NY.\send\NY\role\msg{\NGG}.\inact
\]
where the name $\NX_2$ is instantiated with handler name $\NM_{\NL}$ (eliding from the substitution in process $\PP$). Then, instead of sending $\NL$ on $\NGG$, the new channel $\NE$ is created and 
sent to the handlers of $\NL$ and $\NGG$
\[
 \rest\NE( \PP  
 \parop \PQ  
 \parop \receive\NE\role\msg\NZ.\send\NZ\role\msg{(\NL,\NM_{\NL} )}.\inact
 \parop \send\NE\role\msg{\NGG}.\inact)
\]
enabling handler of channel $\NGG$ to send 
$\NGG$ to the handler of channel $\NL$, leading to 
$
 \rest\NE( \PP  
 \parop \PQ  
 \parop \send\NGG\role\msg{(\NL,\NM_{\NL} )}.\inact
 \parop \inact)
$
where sending $\NL$ (together with the handling name $\NM_{\NL}$) 
along channel $\NGG$ is finally activated.

Notice that we need a few generalizations of this approach:
\begin{itemize}
\item each name can be sent infinitely many times, hence, each handling process must be 
repeatedly available for communication;
\item each channel can be used either as a subject or as an object of output action, 
and, hence, for each channel we need both types of handler processes 
(one as for $\NL$ and the other as for $\NGG$ in the last example). For the rest of this section we will call the handler of a channel a process that comprehends both types of handlers mentioned above for that channel;
\item in the source language, processes synchronize their dual (i.e., input/output) actions directly, while in the target language output process first synchronize with the handler process, and only after that handler process synchronize with the input process. In the example above, process $\PP$ may proceed even though the channel $\NL$ has not been received by any process. Hence, we need a mechanism to allow processes in the target language to synchronize their actions directly.
\end{itemize}

\begin{table}
\[
\begin{array}[t]{@{}r@{}c@{}l}
\enc{ \rest \NK \PP} &=& 
\rest{\NK, \NN_\NK, \NM_\NK} 
(\enc{ \PP }
\parop
\rep\receive{\NN_\NK}\role\msg\NX.\send\NX\role\msg\NK .\inact
\parop 
\rep\receive{\NM_\NK}\role\msg{(\NX_1, \NX_2)}.\receive{\NX_1}\role\msg\NY. \rest\NT\send{\NY}\role\msg{(\NK, \NN_\NK, \NM_\NK, \NT)}.\send{\NX_2}\role\msg\NT.\inact )
\vspace{2ex}\\
\enc{ \send\NA\role\msg\NB.\PP }&= & 
\rest{\NE_1,\NE_2} \send{\NN_\NA}\role\msg{\NE_1}.\send{\NM_\NB}\role\msg{(\NE_1, \NE_2)}.\receive{\NE_2}\role\msg\NY.\send\NY\role\msg{\NE_1}.  \enc{ \PP } 
\vspace{2ex}\\
\enc{ \receive\NA\role\msg\NX.\PP } &=& 
\receive\NA\role\msg{(\NX, \NN_\NX, \NM_\NX, \NX')}.\receive{\NX'}\role\msg\NY. \enc{ \PP }
\vspace{2ex}\\
\enc{ \match{\NC_1}\role\msg{\ND_1}\ldots\match{\NC_n}\role\msg{\ND_n}\receive\NA\role\msg\NX.\PP } &=& 
\match{\NC_1}\role\msg{\ND_1}\ldots\match{\NC_n}\role\msg{\ND_n}\receive\NA\role\msg{(\NX, \NN_\NX, \NM_\NX, \NX')}.\receive{\NX'}\role\msg\NY. \enc{ \PP }
\vspace{2ex}\\
\enc{ \match{\NC_1}\role\msg{\ND_1}\ldots\match{\NC_n}\role\msg{\ND_n}\send\NA\role\msg\NB.\PP }
&=& 
\rest{\NE_1,\NE_2} \match{\NC_1}\role\msg{\ND_1}\ldots\match{\NC_n}\role\msg{\ND_n} \send{\NN_\NA}\role\msg{\NE_1}.\send{\NM_\NB}\role\msg{(\NE_1, \NE_2)}.\receive{\NE_2}\role\msg\NY.\send\NY\role\msg{\NE_1}. \enc{ \PP }
\end{array}\\
\]
\[
\begin{array}[t]{@{}c@{}}
\enc{ \PP_1 \parop \PP_2 } = 
\enc{ \PP_1} \parop \enc{ \PP_2 }
\qquad
\enc{ \rep\PP } = \rep \enc{ \PP }
\qquad
\enc{ \inact} = \inact
\end{array}
\]
\caption{\label{tab:Encoding} Encoding of $\pi$-calculus processes into $C_\pi$ processes.}
\end{table}

We formalize these ideas by introducing an encoding as a pair $(\enc{\cdot}, \varphi_{\enc{\;}})$, where $\enc{\cdot}$ is a translation function and $\varphi_{\enc{\;}}$ is a renaming policy~\cite{DBLP:journals/iandc/Gorla10}. The translation maps each $\pi$-calculus (source) term $\PP$ into the $C_\pi$ (target) term $\enc{\PP}$, and while doing so it uses the renaming policy, that maps each name of the source term $\NA$ into a tuple of names $(\NA, \NN_\NA, \NM_\NA)$, where $\NN_\NA$ and $\NM_\NA$ are not names of any source term, and for each name $\NA$ different names $\NN_\NA$ and $\NM_\NA$ are used. Also, the translation uses some  additional names, which we assume to be from a reserved set of names, disjoint from all names of the source language and all names introduced by the renaming policy. All these definitions follow the idea of~\cite{DBLP:journals/iandc/Gorla10}.

The translation function is defined in Table~\ref{tab:Encoding}.
%
%
The first rule in the table translates a process scoped with the channel restriction. The source process is encoded as 
scoped with the original channel name $\NK$ and the two names associated to $\NK$ by the renaming policy, i.e. $\NN_\NK$ and $\NM_\NK$, and it introduces the handler process for the channel in parallel with $\enc{\PP}$.
We use $\rest{\NK, \NN, \NM}$ to abbreviate $\rest\NK \rest \NN\rest \NM$. 
The handler process has two threads in parallel. The left thread is repeatedly available to be invoked on $\NN_\NK$ (see the rule of output) and it sends channel $\NK$ along the received channel. 
The use of this process will be only to send $\NK$ to the right thread of any other handler process (as $\NN_\NGG$ in the example above). 
The right thread of the handler is repeatedly available to be invoked on $\NM_\NK$ and it receives a pair of names (see the rule for output). 
Along the left received name ($\NX_1$) it receives a channel (from the left thread of some handler process) and outputs the channel $\NK$ together with ``addresses"  of the handler, $\NN_\NK$ and $\NM_\NK$, and, in addition, a new channel $\NT$. 
By sending $\NN_\NK$ and $\NM_\NK$, we make possible for the process that receives (see the rule for input) to be able afterward to directly invoke the handler for $\NK$.
By sending a new channel $\NT$ to the receiving process and also (in the continuation of the right thread of the handler) to the sending process we establish a private connection between the two processes, which then can directly synchronize and activate their continuations at the same time. 


The encoding of the output process creates two fresh channels $\NE_1$ and $\NE_2$, and sends one end of $\NE_1$ to the left thread handler process 
of name $\NA$ (the subject of the output) and the other end of $\NE_1$, together with and $\NE_2$, to the right thread of the handler process for
name $\NB$ (the object of the output). In the continuation, before activating the image of $\PP$, along $\NE_2$ a channel is received and used for the output (to synchronize directly with the input process). 
We remark that in this rule names $\NE_1, \NE_2$ and $\NY$ are taken to be from the reserved set of names, and hence cannot appear as free in  $\enc\PP$. 
The same assumption is made for names $\NX'$ and $\NY$ in the rule for input, hence there also $\NX'$ and $\NY$ are not free in $\enc\PP$. 
In the rule for input four channels are received, the channel together with addresses of his handler and a fresh channel (see the right thread of a handler process). The received fresh channel is only used, as we noted, to synchronize with the sending process (same as the channel received in this synchronization).
The rest of the rules shows that the encoding is homomorphism elsewhere. 

Notice the encoding does not interfere with our notion of ownership described in Section~\ref{sec:examples}. The role of channel administrator is still present, i.e., handlers are included in that domain. 
Hence, controlling such domain can still be done, in contrast to the regular $\pi$-calculus processes where one cannot statically identify a domain where the sending the channel capability is confined to.

We may also notice that in the rule for output (Table~\ref{tab:Encoding}) the two channels (the addresses) of the two handler processes are used, one for the object and the other for the subject of the prefix. 
This reflects the fact that in order to be capable to mimic all the actions of the source term, we need to introduce handler processes for all free names of the input and output prefixes of the source term. 
Since the handler processes are introduced directly in the rule for restricted channels (Table~\ref{tab:Encoding}), we give our main result for the correctness of the encoding only for the $\pi$-calculus processes that contain only bound names. 
A $\pi$-calculus process that has no free names 
is called \emph{closed}.

As closed $\pi$-calculus processes can only exhibit $\tau$ transitions, and those match the reduction semantic~\cite{pi-calculus}, for the simplicity we chose to deal with the reduction semantics of the $\pi$-calculus. 
Therefore, our operational correspondence result relates the set of closed $\pi$-calculus processes, with the reduction semantics (using the reduction relation $\red$, as defined in~\cite{pi-calculus}), and $C_\pi$-processes with the labeled transition system.
Notice that the reduction relation of the $\pi$-calculus relies on the structural congruence relation~\cite{pi-calculus}, which, by rule $\match\NA\role\msg\NA\pi.\PP\equiv\pi.\PP$, may introduce free names. 
These newly introduced names are not of our interest, as they do not require handlers. 
To this end, for a $\pi$-calculus process $\PP$ we define $\fnn{\PP}$, a subset of $\fn{\PP}$ that is invariant with respect to the structural congruence relation. Hence, we define $\fnn{\match\NA\role\msg\NA\pi.\PP}=\fnn{\pi.\PP}$, and otherwise $\fnn{\PP}$ coincides with $\fn{\PP}$. Notice that this means $\fnn{\match\NA\role\msg\NB\pi.\PP}=\{\NA,\NB\}\cup\fnn{\pi.\PP}$, if $\NA\not=\NB$.
 
We use $\ltss{\tau}*$ to denote the transitive closure of $\lts{\tau}$.
We are now ready to present our result, showing that if the source term $\PP$ reduces to $\PQ$ then the encoding of $\PP$ reduces in a number of $\tau$ steps to the process bisimilar to the encoding of the process $\PQ$. Again, we assume that the $\pi$-calculus processes are sum-free.
First, we present an auxiliary result.

\begin{lemma}\label{lem:oper-corresp-with-Hs}
If $\PP \red \PQ$ then 
$
\enc{\PP}\parop \handler \ltss{\tau}* \sim \enc{\PQ}\parop \handler,
$
where
\begin{itemize}
\item if $\fnn{\PP}=\{\NK_1, \ldots, \NK_n\}$ then 
\[
\handler=\prod\limits_{i\in\{1, \ldots, n\}} \handler_{\NK_i},
\]
\item if $\fnn{\PP}=\emptyset$ then 
$\handler=\inact$.
\end{itemize}
\end{lemma}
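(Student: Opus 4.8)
The plan is to proceed by induction on the derivation of the reduction $\PP \red \PQ$, following the inductive definition of the $\pi$-calculus reduction relation: a base communication rule, closure under parallel composition and under restriction, and closure under structural congruence. Before the induction I would isolate two auxiliary facts. The first is a substitution lemma stating that the translation commutes with the renaming policy: for every source process $\PR$, variable $\NX$ and name $\NB$,
\[
\enc{\PR\subst\NB\NX} = \enc{\PR}\subst\NB\NX\subst{\NN_\NB}{\NN_\NX}\subst{\NM_\NB}{\NM_\NX},
\]
proved by a routine structural induction on $\PR$ over the clauses of Table~\ref{tab:Encoding}; this is what makes the instantiated continuation of an encoded input coincide with the encoding of the substituted source continuation. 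The second is a family of garbage-collection bisimilarities of the form $\rest{\NE_1,\NE_2,\NT}(\PR \parop \inact \parop \cdots) \sim \PR$, valid whenever $\NE_1,\NE_2,\NT$ are fresh for $\PR$ and the elided components are inactive residuals; these follow at once from Definition~\ref{def:strong-bisimilarity} and will be used to discard the spent scaffolding after the simulating $\tau$-steps.

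For the base case I would take the bare communication redex $\send\NA\role\msg\NB.\PP_1 \parop \receive\NA\role\msg\NX.\PP_2 \red \PP_1 \parop \PP_2\subst\NB\NX$ (match guards are stripped beforehand by $\match\NA\role\msg\NA\pi.\PP \equiv \pi.\PP$, so they are handled in the structural-congruence case below). Here $\NA,\NB \in \fnn\PP$, so $\handler$ supplies $\handler_\NA$ and $\handler_\NB$. I would then exhibit the explicit sequence of six $\tau$-transitions realizing the synchronization: (i) the encoded output invokes the left thread of $\handler_\NA$ on $\NN_\NA$, passing the private $\NE_1$; (ii) it invokes the right thread of $\handler_\NB$ on $\NM_\NB$, passing $(\NE_1,\NE_2)$; (iii) the activated $\handler_\NA$-copy forwards $\NA$ to the activated $\handler_\NB$-copy along $\NE_1$; (iv) the $\handler_\NB$-copy sends $(\NB,\NN_\NB,\NM_\NB,\NT)$ to the encoded input on $\NA$, extruding the fresh $\NT$; (v) the $\handler_\NB$-copy sends $\NT$ back to the encoded output on $\NE_2$; and (vi) output and input synchronize on $\NT$, simultaneously unguarding $\enc{\PP_1}$ and $\enc{\PP_2}\subst\NB\NX\subst{\NN_\NB}{\NN_\NX}\subst{\NM_\NB}{\NM_\NX}$. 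Using the substitution lemma to rewrite the latter as $\enc{\PP_2\subst\NB\NX}$, and the garbage-collection bisimilarity to discard $\rest{\NE_1,\NE_2,\NT}$ over the inactive residuals (the replicated handlers having regenerated by \rulename{(rep-act)}), the reached process is bisimilar to $\enc{\PP_1}\parop\enc{\PP_2\subst\NB\NX}\parop\handler = \enc\PQ \parop \handler$, as required.

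For the inductive cases I would use that $\fnn{\cdot}$ cannot grow under reduction, so the same $\handler$ (for $\fnn\PP$) still covers $\PQ$, and that strong bisimilarity is preserved by parallel composition and restriction (being the $\pi$-calculus strong bisimilarity restricted to $C_\pi$, it is a congruence for these two operators). In the parallel case $\PP\parop\PR \red \PP'\parop\PR$ I would apply the induction hypothesis to $\PP$, lift the simulating $\tau$-trace into the context $\enc\PR$ by \rulename{(par-l)}, note that the extra handlers $\handler_{\fnn\PR\setminus\fnn\PP}$ and $\enc\PR$ remain passive, and close under the congruence. In the restriction case $\rest\NK\PP \red \rest\NK\PP'$ the encoding already introduces $\handler_\NK$ inside the scope $\rest{\NK,\NN_\NK,\NM_\NK}$, so I would feed this internal handler to the induction hypothesis in place of an external one and re-apply the restriction. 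The structural-congruence case reduces to showing that $\PP\equiv\PP_0$ implies $\enc\PP\sim\enc{\PP_0}$: this is immediate by homomorphism for the standard monoid and scope-extrusion laws, and for the unfolding law $\match\NA\role\msg\NA\pi.\PP\equiv\pi.\PP$ it follows from \rulename{(match)}, which gives the two encoded terms identical transitions, consistently with $\fnn{\cdot}$ being $\equiv$-invariant.

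I expect the main obstacle to be the base case: it is the technical heart, demanding careful bookkeeping of the two scope extrusions (of $\NE_1$ towards the handlers and of $\NT$ towards the encoded input), of the handler-address substitution threaded into $\enc{\PP_2}$, and of the final bisimilarity that collects the spent private names and the regenerated replicated handlers back into the clean target $\enc\PQ\parop\handler$. A secondary difficulty is the restriction case, where one must argue that the bound handler $\handler_\NK$ manufactured by the encoding plays exactly the role of the free-name handler assumed by the induction hypothesis, up to re-scoping and $\sim$.
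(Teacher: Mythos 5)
Your proposal follows essentially the same route as the paper's proof: induction on the reduction derivation, the identical six-$\tau$ simulation sequence in the communication base case, a name-invariance/substitution lemma (the paper's Corollary~\ref{lem:encode-subst}), garbage-collection bisimilarities (Proposition~\ref{prop:bisim-standard-props}), and a lemma converting structural congruence of source terms into bisimilarity of their encodings (the paper's Lemma~\ref{lem:encode-struct-to-bisim}). The base case, the parallel case, and the overall architecture are all in order.

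The genuine gap is in the restriction case. Your recipe --- ``feed this internal handler to the induction hypothesis in place of an external one'' --- only makes sense when $\NK\in\fnn{\PP}$, since only then does the induction hypothesis for $\PP\red\PQ$ supply a handler $\handler_\NK$ that the internal one can impersonate. When $\NK\notin\fnn{\PP}$, the induction hypothesis mentions no handler for $\NK$ at all, yet $\enc{\rest\NK\PP}=\rest{\NK,\NN_\NK,\NM_\NK}(\enc{\PP}\parop\handler_\NK)$ still contains one; it must be shown inert and garbage-collected. This is not routine, because $\NK\notin\fnn{\PP}$ does \emph{not} imply $\NK\notin\fn{\PP}$ (the name may survive inside trivial matches $\match\NK\role\msg\NK\pi$), so you cannot simply extrude $\rest{\NK,\NN_\NK,\NM_\NK}$ past $\enc{\PP}$. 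The paper resolves this with exactly the machinery your sketch lacks: Lemma~\ref{fnn} (any $\PP$ with $\NK\notin\fnn{\PP}$ is structurally congruent to some $\PP'$ with $\NK\notin\fn{\PP'}$; $\fnn{\cdot}$ is invariant under $\equiv$ and under reduction, the latter needed to re-fold the target side $\enc{\rest\NK\PQ}$), Lemma~\ref{lem:encode-struct-to-bisim} to turn $\PP\equiv\PP'$ into $\enc{\PP}\sim\enc{\PP'}$, and the fact $\rest{\NK,\NN_\NK,\NM_\NK}\handler_\NK\sim\inact$ from Proposition~\ref{prop:bisim-standard-props}. Your garbage-collection facts cover only the spent scaffolding $\rest{\NE_1,\NE_2,\NT}(\cdots)$, not whole restricted handlers. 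Relatedly, your claim that the structural-congruence case is ``immediate by homomorphism for the standard monoid and scope-extrusion laws'' overlooks that the encoding is \emph{not} homomorphic on restriction: the laws $\PP\parop\rest\NK\PQ\equiv\rest\NK(\PP\parop\PQ)$, $\rest\NK\inact\equiv\inact$ and $\rest\NK\rest\NL\PP\equiv\rest\NL\rest\NK\PP$ require the same handler-aware bisimilarity reasoning, which is why the paper proves them case by case in Lemma~\ref{lem:encode-struct-to-bisim} rather than by appeal to compositionality.
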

\begin{proof}
The proof is derived to Appendix~\ref{app:proof2}.
\end{proof}
The above result can already be seen as a form of non-standard completeness as it uses the top-level handlers. This hints that the completeness result can also be stated for $\pi$-calculus processes that are not closed, with the encoding that is not compositional, but weakly compositional, such as the encoding from the join-calculus into the $\pi$-calculus~\cite{DBLP:conf/popl/FournetG96}. 
We leave such investigations for future work.

Our main result, given in the next theorem, is a direct consequence of Lemma~\ref{lem:oper-corresp-with-Hs}.

\begin{theorem}[Operational Correspondence: Completeness]\label{theorem:operational-corresp}
Let $\PP$ be a closed (sum-free) $\pi$-calculus process. 
If $\PP\red\PQ$ then $\enc{\PP}\ltss{\tau}* \sim\enc{\PQ}$. 
\end{theorem}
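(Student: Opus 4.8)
The plan is to read off the theorem as the degenerate special case of Lemma~\ref{lem:oper-corresp-with-Hs} in which the top-level handler context disappears, so that no bookkeeping of handlers survives. Since the genuine operational correspondence --- the one carried out modulo the surrounding handler processes --- is already established in that lemma, which I may assume, the only work remaining here is to justify that for a \emph{closed} source process the handler context $\handler$ collapses to $\inact$, and that this idle $\inact$ thread can then be discarded without disturbing either the reduction sequence or the bisimilarity.

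First I would observe that closedness of $\PP$ means $\fn{\PP}=\emptyset$; since $\fnn{\PP}$ is by definition a subset of $\fn{\PP}$, we also get $\fnn{\PP}=\emptyset$. By the second clause in the statement of Lemma~\ref{lem:oper-corresp-with-Hs} this forces $\handler=\inact$. Instantiating the lemma with this $\handler$ under the hypothesis $\PP\red\PQ$ then yields
\[
\enc{\PP}\parop\inact \ltss{\tau}* \sim \enc{\PQ}\parop\inact .
\]
Next I would eliminate the dead $\inact$ component on both sides. Because $\inact$ has no transitions and can never participate in a synchronisation, every $\tau$-step of $\enc{\PP}\parop\inact$ is induced, via rule \rulename{(par-l)}, by a $\tau$-step of $\enc{\PP}$; hence any reduction $\enc{\PP}\parop\inact\ltss{\tau}* S$ has $S=\PR\parop\inact$ with $\enc{\PP}\ltss{\tau}*\PR$. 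Using that $\inact$ is a unit for parallel composition up to strong bisimilarity, i.e. $\PR\parop\inact\sim\PR$ for every $\PR$ (a routine consequence of Definition~\ref{def:strong-bisimilarity}), together with symmetry and transitivity of $\sim$, I would collapse the displayed statement to $\enc{\PP}\ltss{\tau}*\PR\sim\enc{\PQ}$, which is exactly the claim.

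I expect no real obstacle at the level of the theorem itself: all the delicate reasoning --- how the handler processes simulate forwarding, and how the freshly introduced private channels $\NT,\NE_1,\NE_2$ allow a single source synchronisation to be matched by a bounded sequence of $\tau$-steps in the target --- is entirely confined to Lemma~\ref{lem:oper-corresp-with-Hs}. The only point needing minimal care is the bureaucratic one of treating $\inact$ as neutral simultaneously for the transition relation and for $\sim$, so that the multi-step $\tau$-sequence transfers cleanly from $\enc{\PP}\parop\inact$ to $\enc{\PP}$. This is standard, and the derivation is therefore short and essentially mechanical once the lemma is in hand.
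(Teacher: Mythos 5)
Your proposal is correct and follows exactly the route the paper takes: the paper proves the theorem as a ``direct consequence'' of Lemma~\ref{lem:oper-corresp-with-Hs}, instantiated for closed processes where $\fnn{\PP}=\emptyset$ forces $\handler=\inact$. Your only addition is to spell out the bookkeeping the paper leaves implicit --- that the idle $\inact$ component neither contributes nor blocks $\tau$-steps (rule \rulename{(par-l)}) and is a unit for $\parop$ up to $\sim$ --- which is a faithful elaboration, not a different argument.
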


We also believe that our encoding satisfies the soundness property~\cite{DBLP:journals/iandc/Gorla10}. 
This claim relies on the fact that the encoding manipulates the source names in a controlled way. Each name of a source term is translated into a triple of names, and a renaming policy ensures that for different source names different triples of names are used. Other names introduced by the encoding are bound. Using this fact and the fact that each source prefix is translated into a sequence of prefixes of always the same length, we may notice that different post-processing steps might get interleaved, but post-processing steps of different reduction steps in a source term cannot interfere with each other. We also leave formalization of this claim for future work.


\section{Conclusions and related work}\label{sec:related work}

The notion of secrecy has been studied intensively in process calculi in the past 
and the variety of techniques have been proposed. 
The most related to our work are process models building on the $\pi$-calculus, 
such as~\cite{cardelli05, giunti12, crafa07, DBLP:journals/lmcs/KouzapasP17, hennessy05, vivas02}.

Cardelli et al.~\cite{cardelli05} introduce a language construct for 
group creation and a typing discipline, where a group is a type for a channel. 
The group creation construct blocks communications 
of channels that are declared as members of the group outside the initial scope of the group, 
hence preventing the leakage of protected channels. 
Kouzapas and Philippou~\cite{DBLP:journals/lmcs/KouzapasP17} extend 
the model of $\pi$-calculus with groups by constructs that allow reasoning about the private 
data in information systems.
The work of Giunti et. al.~\cite{ giunti12} introduces an operator called hide which 
binds a name and has a similar behavior as a name restriction, but in contrast to 
name restriction it blocks a name extrusion, for which the scope of the hide operator forms 
a kind of a group that the ``hidden" name cannot exit. 
The paper by Vivas and Yoshida~\cite{vivas02} introduces an operator called filter that 
is statically associated to a process and blocks all actions of the process along names that are 
not contained in the (polarized) filter.
We also mention~\cite{hennessy05, crafa07} where the types associate the 
security levels to channels, where, in the latter work downgrading the security level of a channel
is admissible and it is achieved by introducing special, so-called, declassified 
input and output prefix constructs. 
All the above approaches share the property that, when building on the $\pi$-calculus model, 
additional language construct and/or a typing discipline is introduced 
in order to represent some specific aspect of secrecy 
in a dedicated way. 
We believe that $C_\pi$-calculus appears to be more suitable as an underlying model when studying 
secrecy, and as such that many aspects of secrecy can be represented in a more canonical way. 
As a first step, we plan to make a precise representation of group creation~\cite{cardelli05} in the 
$C_\pi$-calculus, following the intuition provided in Section~\ref{sec:groups}.

Several fragments of the $\pi$-calculus have been used 
in different ways and for different purposes. 
The asynchronous $\pi$-calculus~\cite{DBLP:conf/ecoop/HondaT91}, proposed by Honda and Takoro,
constrains the syntax by allowing only an inactive process to be the continuation of the output 
prefix, in this way modelling asynchronous communications. 
The Localised $\pi$-calculus~\cite{merro04}, proposed by Merro and Sangiorgi, 
disallows the input capability for the received names and does not consider the matching operator. 
There, the syntactic restriction is that input placeholder cannot appear as a subject of an 
input, but, in contrast to our work, the forwarding of names is allowed.
The Private $\pi$-calculus~\cite{DBLP:journals/tcs/Sangiorgi96a}, proposed by Sangiorgi, 
makes the restriction that objects of output prefixes are always considered as bound, 
making the symmetry with the input prefixes. 
Although in Private $\pi$-calculus the forwarding of names is not possible, 
it differs significantly from our work in the restriction that one name can be sent only once. 
All these calculi share our goal to investigate specific notions in a dedicated way, without requiring the introduction of specialized primitives, instead by considering a suitable fragment of the $\pi$-calculus.

In this paper, we have presented Confidential $\pi$-calculus, 
a fragment of the $\pi$-calculus~\cite{pi-calculus} in which the forwarding of 
received names is disabled at the syntax level. 
To the best of our knowledge, this is the first process model based on the $\pi$-calculus 
that 
represents the controlled name passing by constraining and not extending the original syntax. 
Some specific properties of our labeled transition system are given and the non-forwarding property is defined. All $C_\pi$ processes satisfy this property and the method to differentiate the $\pi$-calculus process that never forward received names is proposed, relying on the strong bisimilarity relation of $C_\pi$ processes and $\pi$ processes. The strong bisimilarity relation is also used to show that the creation of closed domains for channels is directly representable in the $C_\pi$.
Examples presented in the paper already give some intuition 
on scenarios directly representable in $C_\pi$, such as authentication and group modelling, 
and a complete formalization of these ideas is left for future work.
The encoding presented here shows that our model is as expressive as the $\pi$-calculus, 
and the formal verification of the correctness of the encoding is given in the form of the completeness of  operational correspondence. The soundness of the encoding is left for future work.

{\bf Acknowledgments.} 
The author would like to thank Hugo Torres Vieira and Jovanka Pantovi\' c for supervising the work presented here and a great help, to Daniel Hirschkoff for valuable email discussions, and to anonymous reviewers for useful remarks and suggestions. This work has been partially supported by the Ministry of Education and Science of the Republic of Serbia, project ON174026.


\bibliographystyle{eptcs}
\bibliography{bib}

\begin{thebibliography}{10}
\providecommand{\bibitemdeclare}[2]{}
\providecommand{\surnamestart}{}
\providecommand{\surnameend}{}
\providecommand{\urlprefix}{Available at }
\providecommand{\url}[1]{\texttt{#1}}
\providecommand{\href}[2]{\texttt{#2}}
\providecommand{\urlalt}[2]{\href{#1}{#2}}
\providecommand{\doi}[1]{doi:\urlalt{http://dx.doi.org/#1}{#1}}
\providecommand{\bibinfo}[2]{#2}

\bibitemdeclare{article}{cardelli05}
\bibitem{cardelli05}
\bibinfo{author}{Luca \surnamestart Cardelli\surnameend},
  \bibinfo{author}{Giorgio \surnamestart Ghelli\surnameend} \&
  \bibinfo{author}{Andrew~D. \surnamestart Gordon\surnameend}
  (\bibinfo{year}{2005}): \emph{\bibinfo{title}{Secrecy and group creation}}.
\newblock {\sl \bibinfo{journal}{Inf. Comput.}}
  \bibinfo{volume}{196}(\bibinfo{number}{2}), pp. \bibinfo{pages}{127--155},
  \doi{10.1016/j.ic.2004.08.003}.

\bibitemdeclare{article}{crafa07}
\bibitem{crafa07}
\bibinfo{author}{Silvia \surnamestart Crafa\surnameend} \&
  \bibinfo{author}{Sabina \surnamestart Rossi\surnameend}
  (\bibinfo{year}{2007}): \emph{\bibinfo{title}{Controlling information release
  in the pi-calculus}}.
\newblock {\sl \bibinfo{journal}{Inf. Comput.}}
  \bibinfo{volume}{205}(\bibinfo{number}{8}), pp. \bibinfo{pages}{1235--1273},
  \doi{10.1016/j.ic.2007.01.001}.

\bibitemdeclare{inproceedings}{DBLP:conf/popl/FournetG96}
\bibitem{DBLP:conf/popl/FournetG96}
\bibinfo{author}{C{\'{e}}dric \surnamestart Fournet\surnameend} \&
  \bibinfo{author}{Georges \surnamestart Gonthier\surnameend}
  (\bibinfo{year}{1996}): \emph{\bibinfo{title}{The Reflexive {CHAM} and the
  Join-Calculus}}.
\newblock In \bibinfo{editor}{Hans{-}Juergen \surnamestart Boehm\surnameend} \&
  \bibinfo{editor}{Guy L.~Steele \surnamestart Jr.\surnameend}, editors: {\sl
  \bibinfo{booktitle}{Conference Record of POPL'96: The 23rd {ACM}
  {SIGPLAN-SIGACT} Symposium on Principles of Programming Languages, Papers
  Presented at the Symposium, St. Petersburg Beach, Florida, USA, January
  21-24, 1996}}, \bibinfo{publisher}{{ACM} Press}, pp.
  \bibinfo{pages}{372--385}, \doi{10.1145/237721.237805}.

\bibitemdeclare{article}{DBLP:journals/acta/GayH05}
\bibitem{DBLP:journals/acta/GayH05}
\bibinfo{author}{Simon~J. \surnamestart Gay\surnameend} \&
  \bibinfo{author}{Malcolm \surnamestart Hole\surnameend}
  (\bibinfo{year}{2005}): \emph{\bibinfo{title}{Subtyping for session types in
  the pi calculus}}.
\newblock {\sl \bibinfo{journal}{Acta Inf.}}
  \bibinfo{volume}{42}(\bibinfo{number}{2-3}), pp. \bibinfo{pages}{191--225},
  \doi{10.1007/s00236-005-0177-z}.

\bibitemdeclare{inproceedings}{giunti12}
\bibitem{giunti12}
\bibinfo{author}{Marco \surnamestart Giunti\surnameend},
  \bibinfo{author}{Catuscia \surnamestart Palamidessi\surnameend} \&
  \bibinfo{author}{Frank~D. \surnamestart Valencia\surnameend}
  (\bibinfo{year}{2012}): \emph{\bibinfo{title}{Hide and New in the
  Pi-Calculus}}.
\newblock In \bibinfo{editor}{Bas \surnamestart Luttik\surnameend} \&
  \bibinfo{editor}{Michel~A. \surnamestart Reniers\surnameend}, editors: {\sl
  \bibinfo{booktitle}{Proceedings Combined 19th International Workshop on
  Expressiveness in Concurrency and 9th Workshop on Structured Operational
  Semantics, {EXPRESS/SOS} 2012, Newcastle upon Tyne, UK, September 3, 2012.}},
  {\sl \bibinfo{series}{{EPTCS}}}~\bibinfo{volume}{89}, pp.
  \bibinfo{pages}{65--79}, \doi{10.4204/EPTCS.89.6}.

\bibitemdeclare{article}{DBLP:journals/iandc/Gorla10}
\bibitem{DBLP:journals/iandc/Gorla10}
\bibinfo{author}{Daniele \surnamestart Gorla\surnameend}
  (\bibinfo{year}{2010}): \emph{\bibinfo{title}{Towards a unified approach to
  encodability and separation results for process calculi}}.
\newblock {\sl \bibinfo{journal}{Inf. Comput.}}
  \bibinfo{volume}{208}(\bibinfo{number}{9}), pp. \bibinfo{pages}{1031--1053},
  \doi{10.1016/j.ic.2010.05.002}.

\bibitemdeclare{article}{hennessy05}
\bibitem{hennessy05}
\bibinfo{author}{Matthew \surnamestart Hennessy\surnameend}
  (\bibinfo{year}{2005}): \emph{\bibinfo{title}{The security pi-calculus and
  non-interference}}.
\newblock {\sl \bibinfo{journal}{J. Log. Algebr. Program.}}
  \bibinfo{volume}{63}(\bibinfo{number}{1}), pp. \bibinfo{pages}{3--34},
  \doi{10.1016/j.jlap.2004.01.003}.

\bibitemdeclare{inproceedings}{DBLP:conf/ecoop/HondaT91}
\bibitem{DBLP:conf/ecoop/HondaT91}
\bibinfo{author}{Kohei \surnamestart Honda\surnameend} \&
  \bibinfo{author}{Mario \surnamestart Tokoro\surnameend}
  (\bibinfo{year}{1991}): \emph{\bibinfo{title}{An Object Calculus for
  Asynchronous Communication}}.
\newblock In \bibinfo{editor}{Pierre \surnamestart America\surnameend}, editor:
  {\sl \bibinfo{booktitle}{ECOOP'91 European Conference on Object-Oriented
  Programming, Geneva, Switzerland, July 15-19, 1991, Proceedings}}, {\sl
  \bibinfo{series}{Lecture Notes in Computer Science}} \bibinfo{volume}{512},
  \bibinfo{publisher}{Springer}, pp. \bibinfo{pages}{133--147},
  \doi{10.1007/BFb0057019}.

\bibitemdeclare{inproceedings}{DBLP:conf/esop/HondaVK98}
\bibitem{DBLP:conf/esop/HondaVK98}
\bibinfo{author}{Kohei \surnamestart Honda\surnameend},
  \bibinfo{author}{Vasco~Thudichum \surnamestart Vasconcelos\surnameend} \&
  \bibinfo{author}{Makoto \surnamestart Kubo\surnameend}
  (\bibinfo{year}{1998}): \emph{\bibinfo{title}{Language Primitives and Type
  Discipline for Structured Communication-Based Programming}}.
\newblock In \bibinfo{editor}{Chris \surnamestart Hankin\surnameend}, editor:
  {\sl \bibinfo{booktitle}{Programming Languages and Systems - ESOP'98, 7th
  European Symposium on Programming, Held as Part of the European Joint
  Conferences on the Theory and Practice of Software, ETAPS'98, Lisbon,
  Portugal, March 28 - April 4, 1998, Proceedings}}, {\sl
  \bibinfo{series}{Lecture Notes in Computer Science}} \bibinfo{volume}{1381},
  \bibinfo{publisher}{Springer}, pp. \bibinfo{pages}{122--138},
  \doi{10.1007/BFb0053567}.

\bibitemdeclare{article}{DBLP:journals/lmcs/KouzapasP17}
\bibitem{DBLP:journals/lmcs/KouzapasP17}
\bibinfo{author}{Dimitrios \surnamestart Kouzapas\surnameend} \&
  \bibinfo{author}{Anna \surnamestart Philippou\surnameend}
  (\bibinfo{year}{2017}): \emph{\bibinfo{title}{Privacy by typing in the
  {\(\pi\)}-calculus}}.
\newblock {\sl \bibinfo{journal}{Logical Methods in Computer Science}}
  \bibinfo{volume}{13}(\bibinfo{number}{4}), pp. \bibinfo{pages}{1--42},
  \doi{10.23638/LMCS-13(4:27)2017}.

\bibitemdeclare{article}{merro04}
\bibitem{merro04}
\bibinfo{author}{Massimo \surnamestart Merro\surnameend} \&
  \bibinfo{author}{Davide \surnamestart Sangiorgi\surnameend}
  (\bibinfo{year}{2004}): \emph{\bibinfo{title}{On asynchrony in name-passing
  calculi}}.
\newblock {\sl \bibinfo{journal}{Mathematical Structures in Computer Science}}
  \bibinfo{volume}{14}(\bibinfo{number}{5}), pp. \bibinfo{pages}{715--767},
  \doi{10.1017/S0960129504004323}.

\bibitemdeclare{book}{DBLP:books/sp/Milner80}
\bibitem{DBLP:books/sp/Milner80}
\bibinfo{author}{Robin \surnamestart Milner\surnameend} (\bibinfo{year}{1980}):
  \emph{\bibinfo{title}{A Calculus of Communicating Systems}}.
\newblock {\sl \bibinfo{series}{Lecture Notes in Computer
  Science}}~\bibinfo{volume}{92}, \bibinfo{publisher}{Springer},
  \doi{10.1007/3-540-10235-3}.

\bibitemdeclare{inproceedings}{DBLP:conf/concur/Milner92}
\bibitem{DBLP:conf/concur/Milner92}
\bibinfo{author}{Robin \surnamestart Milner\surnameend} (\bibinfo{year}{1992}):
  \emph{\bibinfo{title}{The Polyadic Pi-calculus (Abstract)}}.
\newblock In \bibinfo{editor}{Rance \surnamestart Cleaveland\surnameend},
  editor: {\sl \bibinfo{booktitle}{{CONCUR} '92, Third International Conference
  on Concurrency Theory, Stony Brook, NY, USA, August 24-27, 1992,
  Proceedings}}, {\sl \bibinfo{series}{Lecture Notes in Computer Science}}
  \bibinfo{volume}{630}, \bibinfo{publisher}{Springer}, p.~\bibinfo{pages}{1},
  \doi{10.1007/BFb0084778}.

\bibitemdeclare{article}{DBLP:journals/tcs/Sangiorgi96a}
\bibitem{DBLP:journals/tcs/Sangiorgi96a}
\bibinfo{author}{Davide \surnamestart Sangiorgi\surnameend}
  (\bibinfo{year}{1996}): \emph{\bibinfo{title}{pi-Calculus, Internal Mobility,
  and Agent-Passing Calculi}}.
\newblock {\sl \bibinfo{journal}{Theor. Comput. Sci.}}
  \bibinfo{volume}{167}(\bibinfo{number}{1{\&}2}), pp.
  \bibinfo{pages}{235--274}, \doi{10.1016/0304-3975(96)00075-8}.

\bibitemdeclare{book}{pi-calculus}
\bibitem{pi-calculus}
\bibinfo{author}{Davide \surnamestart Sangiorgi\surnameend} \&
  \bibinfo{author}{David \surnamestart Walker\surnameend}
  (\bibinfo{year}{2001}): \emph{\bibinfo{title}{The Pi-Calculus - a theory of
  mobile processes}}.
\newblock \bibinfo{publisher}{Cambridge University Press}.

\bibitemdeclare{article}{solove2005taxonomy}
\bibitem{solove2005taxonomy}
\bibinfo{author}{Daniel~J \surnamestart Solove\surnameend}
  (\bibinfo{year}{2005}): \emph{\bibinfo{title}{A taxonomy of privacy}}.
\newblock {\sl \bibinfo{journal}{U. Pa. L. Rev.}} \bibinfo{volume}{154}, p.
  \bibinfo{pages}{477}, \doi{10.2307/40041279}.

\bibitemdeclare{inproceedings}{DBLP:conf/fm/TschantzW09}
\bibitem{DBLP:conf/fm/TschantzW09}
\bibinfo{author}{Michael~Carl \surnamestart Tschantz\surnameend} \&
  \bibinfo{author}{Jeannette~M. \surnamestart Wing\surnameend}
  (\bibinfo{year}{2009}): \emph{\bibinfo{title}{Formal Methods for Privacy}}.
\newblock In \bibinfo{editor}{Ana \surnamestart Cavalcanti\surnameend} \&
  \bibinfo{editor}{Dennis \surnamestart Dams\surnameend}, editors: {\sl
  \bibinfo{booktitle}{{FM} 2009: Formal Methods, Second World Congress,
  Eindhoven, The Netherlands, November 2-6, 2009. Proceedings}}, {\sl
  \bibinfo{series}{Lecture Notes in Computer Science}} \bibinfo{volume}{5850},
  \bibinfo{publisher}{Springer}, pp. \bibinfo{pages}{1--15},
  \doi{10.1007/978-3-642-05089-3\_1}.

\bibitemdeclare{article}{DBLP:journals/iandc/Vasconcelos12}
\bibitem{DBLP:journals/iandc/Vasconcelos12}
\bibinfo{author}{Vasco~T. \surnamestart Vasconcelos\surnameend}
  (\bibinfo{year}{2012}): \emph{\bibinfo{title}{Fundamentals of session
  types}}.
\newblock {\sl \bibinfo{journal}{Inf. Comput.}} \bibinfo{volume}{217}, pp.
  \bibinfo{pages}{52--70}, \doi{10.1016/j.ic.2012.05.002}.

\bibitemdeclare{article}{vivas02}
\bibitem{vivas02}
\bibinfo{author}{Jos{\'{e}}{-}Luis \surnamestart Vivas\surnameend} \&
  \bibinfo{author}{Nobuko \surnamestart Yoshida\surnameend}
  (\bibinfo{year}{2002}): \emph{\bibinfo{title}{Dynamic Channel Screening in
  the Higher Order pi-Calculus}}.
\newblock {\sl \bibinfo{journal}{Electr. Notes Theor. Comput. Sci.}}
  \bibinfo{volume}{66}(\bibinfo{number}{3}), pp. \bibinfo{pages}{170--184},
  \doi{10.1016/S1571-0661(04)80421-3}.

\bibitemdeclare{article}{westin2003social}
\bibitem{westin2003social}
\bibinfo{author}{Alan~F \surnamestart Westin\surnameend}
  (\bibinfo{year}{2003}): \emph{\bibinfo{title}{Social and political dimensions
  of privacy}}.
\newblock {\sl \bibinfo{journal}{Journal of social issues}}
  \bibinfo{volume}{59}(\bibinfo{number}{2}), pp. \bibinfo{pages}{431--453},
  \doi{10.1111/1540-4560.00072}.

\end{thebibliography}

\appendix
\appendix

\renewcommand{\thesection}{A}


\section{Proofs from Section~\ref{sec:behavioral}}
\label{app:proof1}


 


\begin{prop}{\ref{prop:behavioural}}
For any process $\PP$, channel $\NM$ and prefix $\pi$, next equality holds

\[
\rest\NK
( (\rest\NL\send\NK\role\msg\NL.\receive\NM\role\msg\NY.\match\NY\role\msg\NL\pi.\inact) \parop \receive\NK\role\msg\NX.\PP) 
\sim 
\rest\NK
( (\rest\NL\send\NK\role\msg\NL.\receive\NM\role\msg\NY.\inact) \parop \receive\NK\role\msg\NX.\PP).
\]

\end{prop}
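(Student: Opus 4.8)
The plan is to exhibit an explicit symmetric relation $\mathcal{R}$ containing the two processes and to verify coinductively that it is a strong bisimulation. First I would use $\alpha$-conversion to assume $\NL\neq\NM$ and $\NL\notin\fn\PP$; since in $C_\pi$ a variable can never occur as the object of an output, the latter yields $\NL\notin\fo{\PP\subst\NL\NX}$, and this is the invariant that drives the whole argument.

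I would dispatch the top level first. Because $\NK$ is restricted, neither the bound output of $\NL$ on $\NK$ nor the input $\receive\NK\role\msg\NX.\PP$ can fire visibly, and $\PP$ is guarded by that input, so the only enabled transition of each process is the internal synchronisation on $\NK$ obtained by combining \rulename{(open)}, the input, and \rulename{(close-l)}, namely
\[
\rest\NK((\rest\NL\send\NK\role\msg\NL.Q) \parop \receive\NK\role\msg\NX.\PP) \lts{\tau} \rest\NK\rest\NL(Q \parop \PP\subst\NL\NX),
\]
where $Q$ denotes $\receive\NM\role\msg\NY.\match\NY\role\msg\NL\pi.\inact$ on the left-hand side and $\receive\NM\role\msg\NY.\inact$ on the right-hand side. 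As this $\tau$-move is the unique transition and has the same shape on both sides, the two processes match each other and land in corresponding derivatives.

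The substance of the proof lives in the post-synchronisation family. I would let $\mathcal{R}$ relate all pairs $\rest\NK\rest\NL(Q_1 \parop R)$ and $\rest\NK\rest\NL(Q_2 \parop R)$, where $R$ is any $C_\pi$ process with $\NL\notin\fo R$ and $(Q_1,Q_2)$ is either the pair of input-guarded continuations above, or $(\match{\NL'}\role\msg\NL\pi.\inact,\ \inact)$ with $\NL'\neq\NL$, or an identical pair $(Q,Q)$. The transitions to analyse are: an autonomous move of $R$, which is mirrored identically on both sides (so the residual stays in the family, its invariant $\NL\notin\fo{\cdot}$ preserved by Lemma~\ref{lemm:fo-and-transitions}); a visible input $\receive\NM\role\msg\NL'$ by the left thread; or an internal synchronisation on $\NM$ between the left thread and $R$. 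The decisive claim is that whenever the left thread consumes its input on $\NM$, the received name $\NL'$ differs from $\NL$: for a visible input this is forced by the side condition $\NL\notin\n\alpha$ of \rulename{(res)}, and for an internal synchronisation it would require $R$ to emit $\NL$ as an output object, which is impossible since $\NL\notin\fo R$, by Corollary~\ref{cor:fo-and-in-out}(2). Hence the guard $\match{\NL'}\role\msg\NL$ on the left can never fire, so after the input on $\NM$ it collapses to the stuck process $\match{\NL'}\role\msg\NL\pi.\inact$ (with $\NL'\neq\NL$), which has no transitions and is therefore matched by the $\inact$ sitting in the same slot on the right. Every move thus has a partner leading back into $\mathcal{R}$, and symmetrically.

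The step I expect to be the main obstacle is exactly this non-firing claim, that is, establishing that the name received on $\NM$ is never $\NL$ along every reachable configuration. This is where the non-forwarding character of the calculus is essential, and where I must be careful to maintain the invariant $\NL\notin\fo R$ under all of $R$'s moves: inputs leave $\fo$ unchanged, outputs and $\tau$ cannot enlarge it, and bound outputs add only a fresh name distinct from $\NL$, all by Lemma~\ref{lemm:fo-and-transitions}. Once the invariant is secured, the remaining bisimulation obligations are routine, since each non-$\NM$ action is answered by the syntactically identical action on the other side.
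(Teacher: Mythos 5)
Your proposal follows the same strategy as the paper's proof: an explicit symmetric relation driven by the invariant $\NL\notin\fo{R}$, the observation that the match guard can never be instantiated with $\NL$ (visible inputs exclude it by the side condition of \rulename{(res)}, internal ones by Corollary~\ref{cor:fo-and-in-out}), and the fact that a permanently stuck match is observationally equal to $\inact$. That core is sound and is exactly the paper's argument. The gap is that the relation $\mathcal{R}$ you wrote down is not closed under all transitions, so the coinductive obligation ``every residual lands back in $\mathcal{R}$'' fails in two places, both hidden inside your claim that an autonomous move of $R$ ``is mirrored identically on both sides (so the residual stays in the family)''.

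First, scope extrusion of $\NK$: nothing prevents $\NK\in\fo{R}$ (take, e.g., $\PP=\send{\NM'}\role\msg\NK.\inact$), and then $\rest\NK\rest\NL(Q_1\parop R)\lts{\rest\NK\send{\NM'}\role\msg\NK}\rest\NL(Q_1\parop R')$ by \rulename{(par-r)}, \rulename{(res)} and \rulename{(open)}. This bound-output action must be answered (its bound name $\NK$ is not free in the partner process, so Definition~\ref{def:strong-bisimilarity} imposes no restriction here), yet the residual pair has lost the outer binder $\rest\NK$ and is not of the shape $\rest\NK\rest\NL(\cdot\parop\cdot)$ that your family requires. Second, the synchronisation on $\NM$ may be a \rulename{(close-r)} step rather than a \rulename{(comm-r)} step: if $R\lts{\rest\NN\send\NM\role\msg\NN}R'$, the residual is $\rest\NK\rest\NL\rest\NN(\match\NN\role\msg\NL\pi.\inact\parop R')$, carrying an extra restriction $\rest\NN$, again outside your family; your non-firing argument still applies there ($\NN$ is fresh, hence $\NN\neq\NL$), but the pair itself is missing. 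These two omissions are precisely why the paper's relation contains, besides the pairs you list, the variants with $\rest\NK$ removed and the variants with the additional binder $\rest\NN$ (seven families in total). Once you add those pairs---checking that the invariant $\NL\notin\fo{\cdot}$ is preserved, which Lemma~\ref{lemm:fo-and-transitions} gives you---your verification goes through and coincides with the paper's proof.
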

\begin{proof}
The proof follows by coinduction, by showing that the relation 
\[
\begin{array}[t]{@{}rcl@{\;\;}l@{}}
{\cal R} & =  & \{ \big(
\rest\NK
( \rest\NL\send\NK\role\msg\NL.\receive\NM\role\msg\NY.\match\NY\role\msg\NL\pi.\inact \parop \receive\NK\role\msg\NX.\PP), 
\rest\NK
( \rest\NL\send\NK\role\msg\NL.\receive\NM\role\msg\NY.\inact \parop \receive\NK\role\msg\NX.\PP)
\big),  \\
          &   & \;\; \big(
\rest\NK\rest\NL
( \receive\NM\role\msg\NY.\match\NY\role\msg\NL\pi.\inact \parop \PQ), 
\rest\NK \rest\NL
( \receive\NM\role\msg\NY.\inact \parop \PQ)
\big), \\
          &   & \;\; \big(
\rest\NL
( \receive\NM\role\msg\NY.\match\NY\role\msg\NL\pi.\inact \parop \PQ), 
 \rest\NL
( \receive\NM\role\msg\NY.\inact \parop \PQ)
\big), \\
          &   & \;\; \big(
\rest\NK \rest\NL
( \match\NN\role\msg\NL\pi.\inact \parop \PQ), 
\rest\NK \rest\NL
(\inact \parop \PQ)
\big), \\
          &   & \;\; \big(
\rest\NL
( \match\NN\role\msg\NL\pi.\inact \parop \PQ), 
 \rest\NL
 (\inact \parop \PQ)
\big)\\
          &   & \;\; \big(
\rest\NK \rest\NL \rest\NN
( \match\NN\role\msg\NL\pi.\inact \parop \PQ), 
\rest\NK \rest\NL \rest\NN
(\inact \parop \PQ)
\big), \\
          &   & \;\; \big(
\rest\NL \rest\NN
( \match\NN\role\msg\NL\pi.\inact \parop \PQ), 
 \rest\NL \rest\NN
 (\inact \parop \PQ)
\big) \\
&   & \;\;\;|\; \mbox{for all}\; \NN,\NM\in\Chn, \; \mbox{such that} \; \NN\not=\NL, 
\;\hbox{ and all processes}\; \PP \; \mbox{and} \;  \PQ, \; \hbox{such that} \; \NL\notin\fo\PQ\,\}
\end{array}
\]
where $\NN\not=\NL$, is contained in the strong bisimilarity, i.e., ${\cal R}\subseteq \sim$.

We show that each action of one process can be mimicked by the other process in the pair in $\cal R$, 
leading to 
processes that are again in relation $\cal R$. 
Let the process in the first pair
\[
\rest\NK
( \rest\NL\send\NK\role\msg\NL.\receive\NM\role\msg\NY.\match\NY\role\msg\NL\pi.\inact \parop \receive\NK\role\msg\NX.\PP)\lts{\alpha} \PP'.
\]
Then, since actions of the starting process can only be actions of its two branches,
we conclude that either $\alpha=\rest\NL\send\NK\role\msg\NL$ or $\alpha=\receive\NK\role\msg\NN$ or 
it is the synchronization of these two actions, in which case $\alpha=\tau$. 
We reject the first two options, since the subject of the action is bound in the starting process and by 
rule \rulename{(res)} it cannot be observed outside of the process.
Hence, we conclude $\alpha=\tau$ and $\PP'=\rest\NK \rest\NL(\receive\NM\role\msg\NY.\match\NY\role\msg\NL\pi.\inact \parop \PP\subst\NL\NX)$. 
Then, by applying \rulename{(out)}, \rulename{(open)}, \rulename{(in)}, \rulename{(close-l)} and \rulename{(res)}, respectively, we get 
\[
\rest\NK
( \rest\NL\send\NK\role\msg\NL.\receive\NM\role\msg\NY.\inact \parop \receive\NK\role\msg\NX.\PP)
\lts{\tau} 
\rest\NK \rest\NL(\receive\NM\role\msg\NY.\inact \parop \PP\subst\NL\NX),
\]
and since $\NL\notin\fn\PP$ and $\NX$ cannot appear as an object 
in the prefixes in $\PP$ we conclude $\NL\notin\fo{\PP\subst\NL\NX}$. 
Hence,  we have 
$\big(\rest\NK \rest\NL(\receive\NM\role\msg\NY.\match\NY\role\msg\NL\pi.\inact \parop \PP\subst\NL\NX),
\rest\NK \rest\NL(\receive\NM\role\msg\NY.\inact \parop \PP\subst\NL\NX)\big)\in{\cal R}$. 
The symmetric case is analogous.

Now let us consider processes in the second pair of $\cal R$. 
If 
\[
\rest\NK \rest\NL(\receive\NM\role\msg\NY.\match\NY\role\msg\NL\pi.\inact \parop \PQ)\lts{\alpha}\PP',
\]
then observable $\alpha$ can originate from both of the branches or from their synchronization.

\;\emph{---Left branch:}\; 
If the observable originate from the left branch, then $\alpha=\receive\NM\role\msg\NN$, and by 
\rulename{(in)}, \rulename{(par-l)} and \rulename{(res)} we get 
\[
\rest\NK \rest\NL(\receive\NM\role\msg\NY.\match\NY\role\msg\NL\pi.\inact \parop \PQ)
\lts{\receive\NM\role\msg\NN} 
\rest\NK \rest\NL(\match\NN\role\msg\NL\pi.\inact \parop \PQ),
\]
where, by the side condition of \rulename{(res)} we conclude $\NN\notin\{\NK, \NL\}$.
In the same way we get 
\[
\rest\NK \rest\NL(\receive\NM\role\msg\NY.\inact \parop \PQ)
\lts{\receive\NM\role\msg\NN}
\rest\NK \rest\NL(\inact \parop \PQ),
\]
and 
$\big(\rest\NK \rest\NL(\match\NN\role\msg\NL\pi.\inact \parop \PQ), 
\rest\NK \rest\NL(\inact \parop \PQ)\big)\in{\cal R}$ holds.

\;\emph{---Right branch:}\;
If the action originates from the right branch, i.e., 
from $\PQ \lts{\alpha} \PQ'$,
we distinguish two cases:
\begin{itemize}
\item [(i)] if by rules \rulename{(par-r)} and \rulename{(res)} is derived 
\[
\rest\NK \rest\NL(\receive\NM\role\msg\NY.\match\NY\role\msg\NL\pi.\inact \parop \PQ)
\lts{\alpha} 
\rest\NK \rest\NL(\receive\NM\role\msg\NY.\match\NY\role\msg\NL\pi.\inact \parop \PQ'),
\]
where we conclude that $\NK, \NL\notin\n\alpha$, hence $\NL\notin\fo{\PQ'}$.
Then by the same rules we get 
\[
\rest\NK \rest\NL(\receive\NM\role\msg\NY.\inact \parop \PQ)
\lts{\alpha} 
\rest\NK \rest\NL(\receive\NM\role\msg\NY.\inact \parop \PQ'),
\]
and 
$\big(\rest\NK \rest\NL(\receive\NM\role\msg\NY.\match\NN\role\msg\NL\pi.\inact \parop \PQ'),
\rest\NK \rest\NL(\receive\NM\role\msg\NY.\inact \parop \PQ')\big)\in{\cal R}$ holds.
\item [(ii)] if by rules \rulename{(par-r)}, \rulename{(res)} and \rulename{(open)} is derived 
\[
\rest\NK \rest\NL(\receive\NM\role\msg\NY.\match\NY\role\msg\NL\pi.\inact \parop \PQ)
\lts{\rest\NK\alpha} 
\rest\NL(\receive\NM\role\msg\NY.\match\NY\role\msg\NL\pi.\inact \parop \PQ'),
\]
then $\NL\notin\n\alpha$. 
Notice that the scope of channel $\NL$ cannot be 
extruded this way since 
$\NL\notin\fo\PQ$.
Hence, process $\PQ$ cannot 
perform output action with object $\NL$.
Then by the same rules we get 
\[
\rest\NK \rest\NL(\receive\NM\role\msg\NY.\inact \parop \PQ)
\lts{\rest\NK\alpha} 
 \rest\NL(\receive\NM\role\msg\NY.\inact \parop \PQ'),
\]
and, again, 
$\big( \rest\NL(\receive\NM\role\msg\NY.\match\NN\role\msg\NL\pi.\inact \parop \PQ'),
 \rest\NL(\receive\NM\role\msg\NY.\inact \parop \PQ')\big)\in{\cal R}$ holds.
\end{itemize} 
%

\;\emph{---Synchronization of branches:}\;
We again distinguish two cases:
\begin{itemize}
\item [(i)] if from 
\[
\receive\NM\role\msg\NY.\match\NY\role\msg\NL\pi.\inact\lts{\receive\NM\role\msg\NN}\match\NN\role\msg\NL\pi.\inact \qquad \mbox{and} \qquad 
\PQ\lts{\send\NM\role\msg\NN}\PQ',
\]
where we can make the same observation on $\PQ$ as before to conclude that $\NL\not=\NN$,  
by rules \rulename{(comm-r)} and \rulename{(res)} is derived
\[
\rest\NK \rest\NL(\receive\NM\role\msg\NY.\match\NY\role\msg\NL\pi.\inact \parop \PQ)
\lts{\tau}
\rest\NK \rest\NL(\match\NN\role\msg\NL\pi.\inact \parop \PQ').
\]
Then, using $\receive\NM\role\msg\NY.\inact\lts{\receive\NM\role\msg\NN}\inact$,
and the same rules as above we get
\[
\rest\NK \rest\NL(\receive\NM\role\msg\NY.\inact \parop \PQ)
\lts{\tau}
\rest\NK \rest\NL(\inact \parop \PQ'),
\]
and we get $\big( \rest\NK \rest\NL(\match\NN\role\msg\NL\pi.\inact \parop \PQ'),
\rest\NK \rest\NL(\inact \parop \PQ')\big)\in{\cal R}$.
\item [(ii)] 
if from 
\[
\receive\NM\role\msg\NY.\match\NY\role\msg\NL\pi.\inact\lts{\receive\NM\role\msg\NN}\match\NN\role\msg\NL\pi.\inact \qquad \mbox{and} \qquad 
\PQ\lts{\rest\NN\send\NM\role\msg\NN}\PQ',
\]
where 
as before we can assume $\NL\not=\NN$,  
by rules \rulename{(close-r)} and \rulename{(res)} is derived
\[
\rest\NK \rest\NL(\receive\NM\role\msg\NY.\match\NY\role\msg\NL\pi.\inact \parop \PQ)
\lts{\tau}
\rest\NK \rest\NL\rest\NN(\match\NN\role\msg\NL\pi.\inact \parop \PQ'),
\]
then using $\receive\NM\role\msg\NY.\inact\lts{\receive\NM\role\msg\NN}\inact$,
we may observe
\[
\rest\NK \rest\NL(\receive\NM\role\msg\NY.\inact \parop \PP)
\lts{\tau}
\rest\NK \rest\NL\rest\NN(\inact \parop \PQ'),
\]
and $\big( \rest\NK \rest\NL\rest\NN(\match\NN\role\msg\NL\pi.\inact \parop \PQ'),
\rest\NK \rest\NL\rest\NN(\inact \parop \PQ')\big)\in{\cal R}$.
\end{itemize}

The symmetric cases and the rest of the pairs from $\cal R$ are analogous. 
For the rest of the pairs note that in all left branch of the first components we have 
$\match\NN\role\msg\NL\pi.\inact$, where $\NN\not=\NL$, and hence, its observational power is 
equivalent to the observational power of inactive process $\inact$ appearing as the left branch in the right components.


\end{proof}

\begin{prop}{\ref{prop:non-forwarding-of-pi-processes}}
Let $\PP$ be a (sum-free) $\pi$-calculus process. If there is a $C_\pi$ process $\PQ$, such that $\PP\sim\PQ$, then $\PP$ satisfies the non-forwarding property.

\end{prop}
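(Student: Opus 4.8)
The plan is to argue by contradiction, pushing the offending trace across the bisimulation and invoking Theorem~\ref{the:non-forwarding} on the $C_\pi$ side. Suppose $\PP$ does \emph{not} satisfy the non-forwarding property of Definition~\ref{def:non-leaking-pi-processes}. Then there is a trace
\[
\PP=\PP_1\lts{\alpha_1}\PP_2\lts{\alpha_2}\ldots\lts{\alpha_{m}}\PP_{m+1}
\]
together with indices $i<j$ such that $\NL\notin\fn{\PP_i}$, $\alpha_i=\receive\NK\role\msg\NL$, and $\alpha_j=\send{\NK'}\role\msg\NL$. Since $\PP\sim\PQ$ with $\PQ$ a $C_\pi$ process, I would first mimic the prefix $\alpha_1,\ldots,\alpha_{i-1}$ along the bisimulation, obtaining a process $\PQ_i$ with $\PP_i\sim\PQ_i$ that is again in $C_\pi$ (the LTS preserves the syntactic non-forwarding restriction). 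Along the way, bound-output labels may be $\alpha$-renamed fresh so that the side condition $\bn{\alpha}\cap\fn{\cdot}=\emptyset$ of Definition~\ref{def:strong-bisimilarity} is met; recall that $\alpha$-convertible processes are identified.

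The main obstacle is a mismatch of hypotheses: Theorem~\ref{the:non-forwarding} requires the \emph{stronger} premise $\NL\notin\fo{\PQ_i}$, whereas I only know $\NL\notin\fn{\PP_i}$ on the source side, and strong bisimilarity does not preserve free names (a free name may be guarded forever), so $\NL$ could lie in $\fn{\PQ_i}$ and hence possibly in $\fo{\PQ_i}$. I would bridge this gap with an injective-renaming argument. Pick a channel $\NL'\in\Chn$ fresh for the whole trace and for $\PQ_i$, and let $\sigma$ be the injective renaming swapping $\NL$ and $\NL'$. Strong bisimilarity, being the $\pi$-calculus relation restricted to $C_\pi$, is preserved under injective substitutions, so $\PP_i\sigma\sim\PQ_i\sigma$. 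As $\NL\notin\fn{\PP_i}$ we may assume, by $\alpha$-conversion, that $\NL$ does not occur in $\PP_i$ at all, whence $\PP_i\sigma=\PP_i$; and $\PQ_i\sigma$ is again a $C_\pi$ process. Thus $\PP_i\sim\PQ_i\sigma$ with $\NL\notin\fn{\PQ_i\sigma}$, and since $\fo{\PQ_i\sigma}\subseteq\fn{\PQ_i\sigma}$, also $\NL\notin\fo{\PQ_i\sigma}$.

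It then remains to transport the contradiction. Because $\PP_i\sim\PQ_i\sigma$, the process $\PQ_i\sigma$ can mimic the suffix $\PP_i\lts{\alpha_i}\ldots\lts{\alpha_j}\PP_{j+1}$ of the original trace, matching labels exactly; in particular it exhibits a $C_\pi$ trace whose first action is $\alpha_i=\receive\NK\role\msg\NL$ and which later performs $\alpha_j=\send{\NK'}\role\msg\NL$. Applying Theorem~\ref{the:non-forwarding} to this trace with $\NL\notin\fo{\PQ_i\sigma}$ gives that no subsequent action can be of the form $\send{\NK''}\role\msg\NL$, contradicting the presence of $\alpha_j$. Hence no violating trace exists and $\PP$ satisfies the non-forwarding property. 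I expect the renaming step to be the delicate point, as it is precisely what reconciles the free-name formulation of Definition~\ref{def:non-leaking-pi-processes} with the free-output-object formulation of Theorem~\ref{the:non-forwarding} under a purely behavioural hypothesis; the remaining bookkeeping (freshness of bound outputs, closure of $C_\pi$ under the LTS and under injective renaming) is routine.
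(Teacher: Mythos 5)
Your proof is correct, and its skeleton coincides with the paper's: argue by contradiction, transport the offending trace across the bisimulation to the $C_\pi$ side, use a fresh renaming to bridge the mismatch between the hypothesis $\NL\notin\fn{\PP_i}$ and the premise $\NL\notin\fo{\cdot}$ of Theorem~\ref{the:non-forwarding}, and conclude by that theorem. The difference lies in where the renaming is applied and which auxiliary fact carries it. The paper first splits into cases: if $\NL\notin\fn{\PQ_i}$ it concludes immediately; otherwise it renames the \emph{source} suffix $\PP_i\lts{\alpha_i}\cdots\lts{\alpha_m}\PP_{m+1}$ via the substitution lemma for transitions (Lemma~1.4.8 of~\cite{pi-calculus}), exploiting $(\PP_i)\sigma=\PP_i$, and then re-runs the entire bisimulation game from $\PP_1$ with the renamed trace, so that the freshly received $\NL'$ cannot be free in the new $\PQ_i$ and the easy case applies. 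You instead rename the \emph{target} process $\PQ_i$ with a transposition $\sigma$ and invoke closure of strong bisimilarity under injective substitutions to get $\PP_i\sim\PQ_i\sigma$ with $\NL\notin\fo{\PQ_i\sigma}$, then replay only the suffix. Your route is more uniform (no case split, no replay of the prefix), but it rests on a property --- preservation of $\sim$ under injective renaming --- that the paper never states, though it is a standard fact about $\pi$-calculus bisimilarity and is legitimately available here since the paper's $\sim$ is exactly the $\pi$-calculus relation restricted to $C_\pi$ processes; the paper's route needs only the substitution lemma for transitions, which it cites explicitly. Both arguments are sound, and your diagnosis of the delicate point (reconciling the free-name hypothesis with the free-output-object hypothesis under a purely behavioural assumption) is exactly the crux the paper's case analysis is designed to handle.
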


\begin{proof}
Let $\PP_1=\PP$ be a be a (sum-free) $\pi$-calculus process and let $\PP_1\lts{\alpha_1}\PP_2\lts{\alpha_2}\ldots\lts{\alpha_{m}}\PP_{m+1}.$
Let us fix $i\in\{1,\ldots, m-1\}$, and assume $\NL\notin\fn{\PP_i}$ and $\alpha_i=\receive\NK\role\msg\NL$. Since without loss of generality we can assume all bound outputs are fresh, we get $\alpha_j\not=\rest\NL\send{\NK'}\role\msg\NL$, for all $j=i+1, \ldots,m$, directly. 
In addition to the first assumption, let us assume there is $j\in\{i+1, \ldots,m\}$ such that $\alpha_j=\send{\NK'}\role\msg\NL$. 
Since $\PP_1\sim\PQ_1$ (where $\PQ_1=\PQ$), we conclude there are $C_\pi$ processes $\PQ_2, \ldots, \PQ_{m+1}$ such that
\[
\PQ_1\lts{\alpha_1}\PQ_2\lts{\alpha_2}\ldots\lts{\alpha_{m}}\PQ_{m+1}
\] 
and $\PP_n\sim\PQ_n$, for all $n=1,\ldots,m+1$, where $\PQ_i\lts{\receive\NK\role\msg\NL}\PQ_{i+1}$ and $\PQ_j\lts{\send{\NK'}\role\msg\NL}\PQ_{j+1}$. 
We now distinguish two cases.
\begin{enumerate}
\item If $\NL\notin\fn{\PQ_i}$ then we get a direct contradiction with Theorem~\ref{the:non-forwarding}. 
\item If $\NL\in\fn{\PQ_i}$, we choose a fresh channel $\NL'$ and a substitution $\sigma$ that is defined only on channel $\NL$ and maps it to $\NL'$. Then, from $\PP_j\lts{\alpha_j}\PP_{j+1}$, by consequitive application of Lemma~$1.4.8$ of~\cite{pi-calculus}, we conclude $(\PP_j)\sigma\lts{(\alpha_j)\sigma}(\PP_{j+1})\sigma$, for all $j=i+1, \ldots,m$. Since $\NL\notin\fn{\PP_i}$ we get $(\PP_i)\sigma=\PP_i$.
Now from 
\[
\PP_1\lts{\alpha_1}\ldots\lts{\alpha_{i-1}}\PP_i\lts{(\alpha_i)\sigma}(\PP_{i+1})\sigma\lts{(\alpha_{i+1})\sigma}\ldots\lts{(\alpha_{m})\sigma}(\PP_{m+1})\sigma,
\]
and $\PP_1\sim\PQ_1$, we again conclude there are $C_\pi$ processes $\PQ_2, \ldots, \PQ_{m+1}$ such that
\[
\PQ_1\lts{\alpha_1}\ldots\lts{\alpha_{i-1}}\PQ_i\lts{(\alpha_i)\sigma}\PQ_{i+1}\lts{(\alpha_{i+1})\sigma}\ldots\lts{(\alpha_{m})\sigma}\PQ_{m+1},
\]
where $\PP_j\sim\PQ_j$, for all $j=1,\ldots,i$ and $(\PP_j)\sigma\sim\PQ_j$, for all $j=i+1,\ldots,m+1$. 
Since $\NL'$ has been chosen to be a fresh channel, we get $\NL'\notin\fn{\PQ_i}$, and since  $\PQ_i\lts{\receive{(\NK)\sigma}\role\msg{\NL'}}\PQ_{i+1}$ and $\PQ_j\lts{\send{(\NK')\sigma}\role\msg{\NL'}}\PQ_{j+1}$, we fall into the first case, and hence, we again get contradiction with Theorem~\ref{the:non-forwarding}. 

\end{enumerate}
\end{proof}

\appendix
\renewcommand{\thesection}{B}
\section{Proofs from Section~\ref{sec:encoding}}\label{app:proof2}

\paragraph{Abbreviations.} For the sake of readability, use the abbreviation
\[
\handler_\NK=\rep\receive{\NN_\NK}\role\msg\NX.\send\NX\role\msg\NK .\inact
\parop 
\rep\receive{\NM_\NK}\role\msg{(\NX_1, \NX_2)}.\receive{\NX_1}\role\msg\NY. \rest\NT\send{\NY}\role\msg{(\NK, \NN_\NK, \NM_\NK, \NT)}.\send{\NX_2}\role\msg\NT.\inact,
\] 
assuming that $\varphi_{\enc{\;}}(\NK)=(\NK, \NN_\NK, \NM_\NK)$, 
and we omit writing trailing $\inact$'s whenever possible.


%
%

%
%
%
%
%
%
We may notice that the encoding defined in Table~\ref{tab:Encoding} does not introduce any free names by itself, except in the rule for output (and matching prefixed output), where names $\NN_\NA$ and $\NM_\NB$ are introduced. We may also notice also that these introduced names are the ones specified in the renaming policy of names $\NA$ and $\NB$, respectively. Hence, the following result is straightforward.

\begin{lemma}[Name invariance] Let $\PP$ be a $\pi$-calculus process and let substitutions $\sigma$ and $\sigma'$ be such that $\varphi_{\enc{\;}}((\NA)\sigma)=(\varphi_{\enc{\;}}(\NA))\sigma'$, for all $\NA\in\N$.
Then $\enc{(\PP)\sigma}=(\enc{\PP})\sigma'$.
\end{lemma}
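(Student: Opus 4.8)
The plan is to proceed by structural induction on the source process $\PP$. Before the case analysis I would first unpack the compatibility hypothesis $\varphi_{\enc{\;}}((\NA)\sigma)=(\varphi_{\enc{\;}}(\NA))\sigma'$. Writing $\varphi_{\enc{\;}}(\NA)=(\NA, \NN_\NA, \NM_\NA)$ and abbreviating $\NB=(\NA)\sigma$, the identity $(\NB, \NN_\NB, \NM_\NB)=((\NA)\sigma', (\NN_\NA)\sigma', (\NM_\NA)\sigma')$ gives three component equalities: $(\NA)\sigma'=(\NA)\sigma$, $(\NN_\NA)\sigma'=\NN_{(\NA)\sigma}$ and $(\NM_\NA)\sigma'=\NM_{(\NA)\sigma}$. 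In words, $\sigma'$ agrees with $\sigma$ on the original source names and sends each handler address $\NN_\NA$ (resp. $\NM_\NA$) to the handler address of the image $(\NA)\sigma$. This is exactly the information needed to push $\sigma'$ through the encoding rules.

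The base case $\PP=\inact$ and the homomorphic cases $\PP=\PP_1\parop\PP_2$ and $\PP=\rep\PP_1$ are immediate: the encoding commutes with these constructors and substitution distributes over them, so a direct appeal to the induction hypothesis closes them. The prefix cases carry the actual content. For $\PP=\send\NA\role\msg\NB.\PP_1$ I would compute $\enc{(\PP)\sigma}$ by first pushing $\sigma$ inside the prefix, obtaining $\send{(\NA)\sigma}\role\msg{(\NB)\sigma}.(\PP_1)\sigma$, and then applying the output rule; this yields the restricted process with subject $\NN_{(\NA)\sigma}$ and object address $\NM_{(\NB)\sigma}$ and continuation $\enc{(\PP_1)\sigma}$. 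On the other side, applying $\sigma'$ to $\enc{\PP}$ leaves the restricted $\NE_1, \NE_2$ and the reserved $\NY$ untouched and rewrites the two free addresses to $(\NN_\NA)\sigma'$ and $(\NM_\NB)\sigma'$. The two results coincide by the component equalities $(\NN_\NA)\sigma'=\NN_{(\NA)\sigma}$ and $(\NM_\NB)\sigma'=\NM_{(\NB)\sigma}$ together with the induction hypothesis on $\PP_1$.

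The input case $\PP=\receive\NA\role\msg\NX.\PP_1$ is analogous but simpler: only the subject $\NA$ occurs free, so $(\NA)\sigma'=(\NA)\sigma$ suffices, while the bound tuple $(\NX, \NN_\NX, \NM_\NX, \NX')$ and the reserved $\NY$ are fixed, and the continuation is handled by induction. The restriction case $\PP=\rest\NK\PP_1$ unfolds to the restricted composition of $\enc{(\PP_1)\sigma}$ with the handler $\handler_\NK$; since $\NK, \NN_\NK, \NM_\NK$ are all bound by the restriction and the handler body uses only these together with reserved names, $\sigma'$ acts as the identity on the entire handler block, and the case once more reduces to the induction hypothesis. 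The two matching-prefix rules are treated exactly like the output and input cases, noting that each matched name is rewritten consistently by $\sigma$ and $\sigma'$ through the first component equality.

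The one point requiring care, and the only genuine obstacle, is name capture. The encoding introduces bound names (the restricted $\NE_1, \NE_2$ in outputs, and $\NK, \NN_\NK, \NM_\NK, \NT$ inside the handler) and reserved names (such as $\NX'$ and $\NY$), and the argument that $\sigma'$ leaves them fixed is valid only when they are fresh for $\sigma'$. I would discharge this using the standing convention that processes are identified up to $\alpha$-conversion and that the reserved set is chosen disjoint from the source names and from the range of the renaming policy: under this convention one may always $\alpha$-rename the bound and reserved names of $\enc{\PP}$ apart from $\sigma'$ before applying it, so that no capture occurs and the computations above go through verbatim.
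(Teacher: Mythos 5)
Your proof is correct and follows essentially the same route as the paper, which does not spell out an induction at all but merely observes that the encoding introduces no free names beyond the handler addresses $\NN_\NA$, $\NM_\NB$ of the output rules, these being exactly the names governed by the renaming policy, and declares the lemma straightforward. Your structural induction, with the unpacking of the compatibility hypothesis into the three component equalities and the $\alpha$-conversion caveat for the bound and reserved names, is just the careful formalization of that same observation.
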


For the operational correspondence, we will need only one case of the name invariance result, and we state it in the next corollary.

\begin{corollary}\label{lem:encode-subst}
If $\varphi_{\enc{\;}}(\NK)=(\NK, \NN_\NK, \NM_\NK)$ and $\varphi_{\enc{\;}}(\NX)=(\NX, \NN_\NX, \NM_\NX)$ then 
\[
\enc{\PP}\subst{\NK}{\NX}\subst{\NN_\NK}{\NN_\NX}\subst{\NM_\NK}{\NM_\NX}= \enc{\PP\subst{\NK}{\NX}}.
\]
\end{corollary}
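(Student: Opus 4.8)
The plan is to obtain the corollary directly as an instance of the preceding Name invariance lemma. I would instantiate that lemma with the source-level substitution $\sigma = \subst{\NK}{\NX}$ and the target-level substitution $\sigma' = \subst{\NK}{\NX}\subst{\NN_\NK}{\NN_\NX}\subst{\NM_\NK}{\NM_\NX}$. With these choices the lemma's conclusion $\enc{(\PP)\sigma} = (\enc{\PP})\sigma'$ is literally the equation asserted by the corollary, so the only real work is to check that the lemma's hypothesis, namely $\varphi_{\enc{\;}}((\NA)\sigma) = (\varphi_{\enc{\;}}(\NA))\sigma'$ for every $\NA \in \N$, holds for this particular pair $(\sigma, \sigma')$.

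I would verify this hypothesis by a case analysis on $\NA$. When $\NA = \NX$, the left-hand side is $\varphi_{\enc{\;}}(\NK) = (\NK, \NN_\NK, \NM_\NK)$, while the right-hand side is $(\NX, \NN_\NX, \NM_\NX)\sigma'$, which $\sigma'$ rewrites componentwise to $(\NK, \NN_\NK, \NM_\NK)$; the two sides agree. When $\NA \neq \NX$, the substitution $\sigma$ fixes $\NA$, so the left-hand side is $\varphi_{\enc{\;}}(\NA) = (\NA, \NN_\NA, \NM_\NA)$, and it remains to show that $\sigma'$ fixes each of $\NA$, $\NN_\NA$, and $\NM_\NA$, so that the right-hand side equals the same triple.

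The step requiring care, and the main obstacle, is establishing that $\sigma'$ leaves $\NN_\NA$ and $\NM_\NA$ untouched when $\NA \neq \NX$; this is exactly where the properties of the renaming policy are used. Since $\NA \neq \NX$, injectivity of the policy on the auxiliary names gives $\NN_\NA \neq \NN_\NX$ and $\NM_\NA \neq \NM_\NX$; and since the $\NN$- and $\NM$-names are drawn from a reserved set disjoint from the source names and mutually disjoint, none of $\NA$, $\NN_\NA$, $\NM_\NA$ coincides with any of $\NX$, $\NN_\NX$, $\NM_\NX$. Hence $\sigma'$ acts as the identity on all three components, and the hypothesis is satisfied, so the corollary follows. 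I would also remark that the same reserved-set disjointness ensures $\sigma'$ pushes through the binders introduced by $\enc{\cdot}$ without capture, which is what lets the Name invariance lemma be applied cleanly.
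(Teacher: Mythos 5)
Your proposal is correct and follows exactly the paper's route: the paper obtains this corollary as ``one case of the name invariance result,'' i.e., by instantiating the Name invariance lemma with $\sigma=\subst{\NK}{\NX}$ and $\sigma'=\subst{\NK}{\NX}\subst{\NN_\NK}{\NN_\NX}\subst{\NM_\NK}{\NM_\NX}$, which is precisely what you do. Your explicit case analysis verifying the lemma's hypothesis via the injectivity and reserved-set disjointness of the renaming policy only spells out details the paper leaves implicit.
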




For the operational correspondence we will need definition of strong bisimilarity of 
polyadic $C_\pi$-calculus. Such definition is exactly the same as Definition~\ref{def:strong-bisimilarity},
 except that labels of the LTS (where rules in Table~\ref{tab:Transition} are adapted for polyadic calculus following expected lines) are carrying a tuple of channels, i.e., 
\[
\alpha ::= \quad \send\NK\role\msg{(\NL_1, \ldots,\NL_n)} \quad \parop \quad \receive\NK\role\msg{(\NL_1, \ldots,\NL_n)} \quad \parop \quad \rest{\NL_1, \ldots,\NL_m}\send\NK\role\msg{(\NL_1, \ldots,\NL_n)} \quad \parop \quad \tau
\]
For such strong bisimilarity relation $\sim$ we may show to obey some standard properties stated in the next proposition.
\begin{proposition}\label{prop:bisim-standard-props}
\begin{enumerate}
\item $\sim$ is an equivalence relation and a non-input congruence;
\item $\match\NA\role\msg\NA\pi.\PP \sim \pi.\PP$;
\item $\PP_1 \parop (\PP_2 \parop \PP_3) \sim (\PP_1 \parop \PP_2) \parop \PP_3$;
\item $\PP_1 \parop \PP_2 \sim \PP_2 \parop \PP_1$;
\item $\PP \parop \inact \sim \PP$;
\item $\rest\NK\rest\NL\PP \sim \rest\NL\rest\NK\PP$;
\item $\rest\NK \inact \sim \inact$;
\item $\PP_1 \parop \rest\NK\PP_2 \sim \rest\NK (\PP_1 \parop \PP_2)$, if $\NK\notin\fn{\PP_1}$;
\item $\rep\PP \sim \PP \parop \rep\PP$;
\item $\rest\NK \rep\receive\NK\role\msg{(\NX_1, \ldots, \NX_n)}.\PP \sim \inact$;
\item $\rest{\NK, \NN_\NK, \NM_\NK} \handler_\NK \sim \inact$.
\end{enumerate}
\end{proposition}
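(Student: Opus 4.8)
The plan is to separate the standard properties (items 1--9), which are inherited wholesale from the $\pi$-calculus, from the two genuinely $C_\pi$-specific laws (items 10 and 11). The key preliminary observation, already recorded after Definition~\ref{def:strong-bisimilarity}, is that the transition rules of $C_\pi$ are literally those of the $\pi$-calculus and that $\sim$ is $\pi$-calculus strong bisimilarity restricted to $C_\pi$ processes. To make this restriction harmless I would first check that $C_\pi$ is closed under transitions: if $\PP$ is a $C_\pi$ process and $\PP\lts{\alpha}\PP'$, then $\PP'$ is again a $C_\pi$ process. The only nontrivial case is rule \rulename{(in)}, where $\PP\subst\NL\varx$ could in principle create an illegal output; but in a $C_\pi$ process the placeholder $\varx$ never occurs as the object of an output, so the substitution cannot produce a forbidden forwarding, and every other operator manifestly preserves membership in $C_\pi$. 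This guarantees that all the witnessing relations below can be taken inside $C_\pi$, so the familiar $\pi$-calculus proofs transfer verbatim.

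With this in place, item~1 follows by the usual development: reflexivity, symmetry and transitivity are immediate from the definition, while preservation under output prefix, matching, parallel composition, restriction and replication is obtained by the standard construction that closes a given bisimulation under each operator and verifies the transfer property by a case analysis on the last applied transition rule; input prefix is excluded exactly as in the $\pi$-calculus, because the instantiation performed by rule \rulename{(in)} defeats the transfer property. For the structural laws 2--9 I would exhibit the standard witnessing bisimulations (the identity relation augmented with the single relevant pair and closed under the surrounding operators); each is classical for the $\pi$-calculus and its components are evidently $C_\pi$ processes. In particular, the side condition $\NK\notin\fn{\PP_1}$ in item~8 is exactly what makes the scope-extrusion relation pass the transfer property through rules \rulename{(res)}, \rulename{(open)} and \rulename{(close-l)}, and item~2 holds because $\match\NA\role\msg\NA\pi.\PP$ has precisely the transitions of $\pi.\PP$ via rule \rulename{(match)}.

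For items 10 and 11 the strategy is uniform: I would prove that the left-hand side admits no transitions at all, whence the symmetric closure of $\{(\mathrm{LHS},\inact),(\inact,\inact)\}$ is trivially a bisimulation. For item~10, the guarded body $\receive\NK\role\msg{(\NX_1,\ldots,\NX_n)}.\PP$ can only offer an input on $\NK$ (rule \rulename{(in)}, lifted by \rulename{(rep-act)}), and the enclosing restriction suppresses this action because the side condition $\NK\notin\n\alpha$ of rule \rulename{(res)} fails; as the body has no output transition, neither \rulename{(rep-comm)} nor \rulename{(rep-close)} can fire, so the process is stuck just like $\inact$. For item~11 I would unfold $\handler_\NK$ into its two parallel replicated inputs, guarded by $\NN_\NK$ and $\NM_\NK$ respectively; before being triggered neither thread offers any output, so no synchronization via \rulename{(comm-l)}, \rulename{(close-l)}, \rulename{(rep-comm)} or \rulename{(rep-close)} is possible, and the only candidate actions, the inputs on $\NN_\NK$ and on $\NM_\NK$, are both blocked by the restriction $\rest{\NK,\NN_\NK,\NM_\NK}$ through rule \rulename{(res)}. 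Hence the process is again stuck and therefore bisimilar to $\inact$.

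I expect the main obstacle to be item~1, and specifically the non-input congruence: preservation under parallel composition and restriction requires careful bookkeeping of the bound-name side conditions to avoid name capture when closing a bisimulation under a context, together with an explicit justification of why input prefix must be excluded. Everything else is routine once closure of $C_\pi$ under transitions is established; in particular, items 10 and 11, though they are the new content of the proposition, reduce to the single observation that their left-hand sides are deadlocked, which is immediate from the subject-name side condition of rule \rulename{(res)}.
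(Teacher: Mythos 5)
Your proposal is correct, but there is nothing in the paper to compare it against: the paper states this proposition with no proof at all (it is introduced only by the phrase ``we may show to obey some standard properties'' and then used in the proofs of Lemma~\ref{lem:encode-struct-to-bisim} and Lemma~\ref{lem:oper-corresp-with-Hs}). Your write-up therefore supplies the missing argument, and it does so along exactly the lines the authors implicitly rely on: since the $C_\pi$ LTS is literally the $\pi$-calculus LTS and $\sim$ is $\pi$-calculus strong bisimilarity restricted to $C_\pi$ terms, items 1--9 are inherited from standard $\pi$-calculus theory, while items 10 and 11 are the only genuinely new laws. Two points of your proposal are worth highlighting as the real content. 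First, the closure of $C_\pi$ under transitions --- your observation that the substitution performed by \rulename{(in)} cannot manufacture a forbidden output because variables never occur as objects of outputs in $C_\pi$ syntax --- is precisely what makes the restriction of $\pi$-bisimilarity to $C_\pi$ harmless, and it deserves to be recorded as a standalone lemma (the paper never states it explicitly, although Lemma~\ref{lemm:fo-and-transitions} is in the same spirit). Second, your deadlock argument for items 10 and 11 is sound: in both processes every unguarded prefix is an input whose subject is bound by the outermost restrictions, so rule \rulename{(res)} suppresses it, and since no output is unguarded none of \rulename{(comm-l)}, \rulename{(close-l)}, \rulename{(rep-comm)}, \rulename{(rep-close)} can produce a $\tau$; hence both left-hand sides have empty transition sets and the singleton relation with $\inact$ is a bisimulation. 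One cosmetic remark: the proposition lives in the polyadic calculus (labels carry tuples), so your appeals to the monadic rules should be read modulo the evident polyadic adaptation; your arguments are insensitive to arity, so nothing breaks.
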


We take the structural congruence relation $\equiv$ as it is defined  for the $\pi$-calculus processes in~\cite{pi-calculus}.

\begin{lemma}\label{lem:encode-struct-to-bisim}
If $\PP\equiv\PQ$ then $\enc{\PP} \sim \enc{\PQ}$.
\end{lemma}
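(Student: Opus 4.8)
The plan is to proceed by induction on the derivation of $\PP\equiv\PQ$, treating $\equiv$ as the least congruence generated by the structural axioms of the $\pi$-calculus of~\cite{pi-calculus}. The base cases are the axioms themselves, and the inductive cases are the closure of $\equiv$ under reflexivity, symmetry, transitivity and the process constructors. The backbone of the argument is that, by Table~\ref{tab:Encoding}, the encoding is a homomorphism on parallel composition, replication and $\inact$, whereas it is \emph{not} homomorphic on restriction, since $\enc{\rest\NK\PP}=\rest{\NK,\NN_\NK,\NM_\NK}(\enc\PP\parop\handler_\NK)$ introduces a handler in parallel. Consequently, the axioms that merely rearrange parallel and replicated components map directly onto Proposition~\ref{prop:bisim-standard-props}: commutativity, associativity and the unit law for $\parop$ come from items 3--5, the law $\rep\PP\equiv\PP\parop\rep\PP$ from item 9, and the match law $\match\NA\role\msg\NA\pi.\PP\equiv\pi.\PP$ from item 2, peeling the leading match off the encoded prefix block (and using non-input congruence under the outer $\rest{\NE_1,\NE_2}$ in the output case).

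The delicate base cases are those involving restriction, precisely because of the handler. For $\rest\NK\rest\NL\PP\equiv\rest\NL\rest\NK\PP$ I would first use scope extrusion (item 8) to pull each handler under the neighbouring restriction---legitimate since $\NK\neq\NL$ and the renaming policy assigns distinct auxiliary names to distinct channels, so neither handler has the other triple free---bringing both sides to the common form $\rest{\NK,\NN_\NK,\NM_\NK}\rest{\NL,\NN_\NL,\NM_\NL}(\enc\PP\parop\handler_\NK\parop\handler_\NL)$ up to commutativity of restrictions (item 6) and of $\parop$ (items 3,4). For $\rest\NK\inact\equiv\inact$ I would compute $\enc{\rest\NK\inact}=\rest{\NK,\NN_\NK,\NM_\NK}(\inact\parop\handler_\NK)\sim\rest{\NK,\NN_\NK,\NM_\NK}\handler_\NK\sim\inact$ using items 5 and 11. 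For scope extrusion $\PP\parop\rest\NK\PQ\equiv\rest\NK(\PP\parop\PQ)$ with $\NK\notin\fn\PP$, the key observation is that $\NK\notin\fn\PP$ forces $\NK,\NN_\NK,\NM_\NK\notin\fn{\enc\PP}$ (the encoding introduces $\NN_\NK,\NM_\NK$ only where $\NK$ itself occurs as subject or object of a prefix), so item 8 applies once more to push $\enc\PP$ under the restriction guarding the handler.

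For the inductive closure, reflexivity, symmetry and transitivity are immediate from $\sim$ being an equivalence (item 1), while closure under $\parop$, replication and $\rest\NK$ follows since $\sim$ is a non-input congruence (item 1) and the encoding of these constructors is compositional up to the handler, itself preserved by congruence. The main obstacle is closure under input prefix: from $\PP\equiv\PQ$ one must obtain $\enc{\receive\NA\role\msg\NX.\PP}\sim\enc{\receive\NA\role\msg\NX.\PQ}$, whose bodies are $\enc\PP$ and $\enc\PQ$ guarded by the outer input that binds the received channel triple $(\NX,\NN_\NX,\NM_\NX,\NX')$, and $\sim$ is only a \emph{non-input} congruence. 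I would resolve this by strengthening the induction hypothesis to ``$\PP\equiv\PQ$ implies $(\enc\PP)\sigma\sim(\enc\PQ)\sigma$ for every substitution $\sigma$'', which is justified because the structural axioms are stable under substitution, so $\equiv$ is substitution-closed. With this uniform hypothesis, when the encoded input fires and instantiates the triple, Corollary~\ref{lem:encode-subst} identifies the two residuals with encodings of substituted sources, which are related; the remaining inner inputs ($\receive{\NE_2}\role\msg\NY$ in the output case and $\receive{\NX'}\role\msg\NY$ in the input case) receive only reserved names absent from the encoded body, so plain non-input congruence suffices there. Finally, $\alpha$-conversion is covered by the standing convention that processes are identified up to $\alpha$-equivalence, which the encoding respects.
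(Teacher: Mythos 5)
Your proposal is correct, and on the axiom cases it follows the same route as the paper: both dispatch the match law, the restriction swap, $\rest\NK\inact\equiv\inact$, scope extrusion and the parallel/replication laws by unfolding the encoding and applying the algebraic identities of Proposition~\ref{prop:bisim-standard-props}, with the same care about the handler (your justification that $\NK\notin\fn\PP$ forces the whole triple $\NK,\NN_\NK,\NM_\NK$ out of $\fn{\enc\PP}$ is a fact the paper uses tacitly in its scope-extrusion case). Where you genuinely depart from the paper is in the closure cases: the paper's proof is a case analysis on the axioms ending with ``the rest of the cases are analogous,'' which silently absorbs closure of $\equiv$ under contexts, but closure under input prefix is \emph{not} analogous, precisely because $\sim$ is only a non-input congruence (item 1 of Proposition~\ref{prop:bisim-standard-props}) and rule \rulename{(in)} instantiates the placeholder in the residual. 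Your strengthened induction hypothesis --- $\PP\equiv\PQ$ implies $(\enc\PP)\sigma\sim(\enc\PQ)\sigma$ for every substitution $\sigma$ --- is exactly the device needed here: an encoded input $\receive\NA\role\msg{(\NX, \NN_\NX, \NM_\NX, \NX')}.\receive{\NX'}\role\msg\NY.\enc\PP$ may receive an \emph{arbitrary} tuple of channels, so its residuals are arbitrary substitution instances of $\enc\PP$ (not necessarily encodings of source terms), and only a substitution-closed hypothesis relates them to the matching residuals on the $\PQ$ side; carrying $\sigma$ through the other cases is harmless since the laws of Proposition~\ref{prop:bisim-standard-props} are themselves stable under substitution. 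So the paper's version is terser, while yours buys a complete argument for the congruence closure that the lemma, as stated over the full structural congruence of~\cite{pi-calculus}, actually requires. One small imprecision: the inner inputs do not ``receive only reserved names'' --- they can receive anything --- but the bound placeholder $\NY$ (and $\NX'$) does not occur free in the encoded body, which is the fact your argument really uses and the one the paper asserts when defining the encoding.
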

\begin{proof}
The proof is by case analysis on the structural congruence rule applied. 
\begin{enumerate}
\item $\match\NA\role\msg\NA\pi.\PP \equiv \pi.\PP$.

We distinguish two cases for prefix $\pi$. 
	\begin{enumerate}
	\item If $\pi=\match{\NB_1}\role\msg{\NC_1}\ldots\match{\NB_n}\role\msg{\NC_n}\receive\ND\role\msg\NX$,
	then by definition of the encoding and Proposition~\ref{prop:bisim-standard-props} we get 
	\[
	\begin{array}{l@{\;}c@{\;}l}
	\enc{\match\NA\role\msg\NA\pi.\PP} 
	& = &
	\match\NA\role\msg\NA\match{\NB_1}\role\msg{\NC_1}\ldots\match{\NB_n}\role\msg{\NC_n}\receive\ND\role\msg{(\NX, \NN_\NX, \NM_\NX, \NX')}.\receive{\NX'}\role\msg\NY.\enc{\PP} \\
	&\sim &
	\match{\NB_1}\role\msg{\NC_1}\ldots\match{\NB_n}\role\msg{\NC_n}\receive\ND\role\msg{(\NX, \NN_\NX, \NM_\NX, \NX')}.\receive{\NX'}\role\msg\NY.\enc{\PP} \\
	& = &
	\enc{\match{\NB_1}\role\msg{\NC_1}\ldots\match{\NB_n}\role\msg{\NC_n}\receive\ND\role\msg\NX.\PP} \\
	& = & 
	\enc{\pi.\PP}.
	\end{array}
	\]

	\item If $\pi=\match{\NB_1}\role\msg{\NC_1}\ldots\match{\NB_n}\role\msg{\NC_n}\send\ND\role\msg\NGG$,
	then, again, by definition of the encoding and Proposition~\ref{prop:bisim-standard-props} we get 

	\[
	\begin{array}{l@{\;}c@{\;}l}
	\enc{\match\NA\role\msg\NA\pi.\PP} 
	& = &  \rest{\NE_1, \NE_2}\match\NA\role\msg\NA\match{\NB_1}\role\msg{\NC_1}\ldots\match{\NB_n}\role\msg{\NC_n}\send{\NN_\ND}\role\msg{\NE_1}.\send{\NM_\NGG}\role\msg{(\NE_1, \NE_2)}.\receive{\NE_2}\role\msg\NX.\send\NX\role\msg{\NE_1}.\enc{\PP}\\
	& \sim &  \rest{\NE_1, \NE_2}\match{\NB_1}\role\msg{\NC_1}\ldots\match{\NB_n}\role\msg{\NC_n}\send{\NN_\ND}\role\msg{\NE_1}.\send{\NM_\NGG}\role\msg{(\NE_1, \NE_2)}.\receive{\NE_2}\role\msg\NX.\send\NX\role\msg{\NE_1}.\enc{\PP}\\
	& = &
	\enc{\match{\NB_1}\role\msg{\NC_1}\ldots\match{\NB_n}\role\msg{\NC_n}\send\ND\role\msg\NE.\PP} \\
	& = & 
	\enc{\pi.\PP}.
	\end{array}
	\]
	\end{enumerate}

\item $\rest\NK\rest\NL\PP \equiv \rest\NL\rest\NK\PP$. By the definition of the encoding and Proposition~\ref{prop:bisim-standard-props} we get 
 
\[
\begin{array}{l@{\,}c@{\,}l}
\enc{\rest\NK\rest\NL\PP}& = & 
\rest{\NK, \NN_\NK, \NM_\NK}( \rest{\NL, \NN_\NL, \NM_\NL}(\enc{\PP}\parop \handler_\NL )\parop \handler_\NK)\\
& \sim &  \rest{\NL, \NN_\NL, \NM_\NL}(\rest{\NK, \NN_\NK, \NM_\NK} (\enc{\PP} \parop \handler_\NK) \parop \handler_\NL )) \\
& = & \enc{\rest\NL\rest\NK\PP}.
\end{array}
\]

\item $\rest\NK\inact \equiv \inact$. By the definition of the encoding and Proposition~\ref{prop:bisim-standard-props} we get 
\[
\begin{array}{l@{\,}c@{\,}l}
\enc{\rest\NK\inact} & = &
\rest{\NK, \NN_\NK, \NM_\NK}(\inact \parop \handler_\NK) \\
& \sim & \rest{\NK, \NN_\NK, \NM_\NK}\handler_\NK \\
&\sim & \inact= \enc{\inact}.\\
\end{array}
\]

\item $\PP \parop \rest\NA\PQ \equiv \rest\NA (\PP \parop \PQ)$, if $\NA\notin\fn\PP$. By the definition of the encoding and Proposition~\ref{prop:bisim-standard-props} we get 
\[
\begin{array}{l@{\,}c@{\,}l}
\enc{\PP \parop \rest\NK\PQ} & = & 
\enc{\PP} \parop \rest{\NK, \NN_\NK, \NM_\NK} (\enc{\PQ} \parop \handler_\NK) \\
& \sim & 
\rest{\NK, \NN_\NK, \NM_\NK}(\enc{\PP} \parop  \enc{\PQ} \parop \handler_\NK) \\
& = & 
\rest{\NK, \NN_\NK, \NM_\NK}(\enc{\PP \parop  \PQ} \parop \handler_\NK) \\
& = & 
\enc{\rest\NL(\PP \parop \PQ)}.
\end{array}
\]

\item The rest of the cases are analogous.
\end{enumerate}
\end{proof}
\begin{lemma}\label{fnn}
Let $\PP$ and $\PQ$ be $\pi$-calculus processes. 
\begin{enumerate}
\item If $\PP\equiv\PQ$ then $\fnn{\PP}=\fnn{\PQ}$.
\item If $\NA\notin\fnn{\PP}$ then there exist a $\pi$-calculus process $\PP'$ such that $\PP\equiv\PP'$ and $\NA\notin\fn{\PP'}$.
\item if $\NA\notin\fnn\PP$ and $\PP\red\PQ$ then $\NA\notin\fnn\PQ$.
\end{enumerate} 
\end{lemma}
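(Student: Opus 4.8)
The plan is to prove the three items in order, since the latter two rest on the first. The guiding observation is that $\fnn{\cdot}$ agrees with $\fn{\cdot}$ except that it discards exactly the names contributed by guards of the form $\match\NA\role\msg\NA$, which are precisely the names the structural-congruence rule $\match\NA\role\msg\NA\pi.\PP \equiv \pi.\PP$ is able to add or delete; this is what should make $\fnn{\cdot}$ stable under $\equiv$ where $\fn{\cdot}$ is not.

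For the first item I would induct on the derivation of $\PP \equiv \PQ$, by cases on the last rule. The crucial axiom is the match law, for which $\fnn{\match\NA\role\msg\NA\pi.\PP} = \fnn{\pi.\PP}$ holds immediately by the defining clause of $\fnn{\cdot}$, so neither side sees a change. For the monoid laws for $\parop$, the laws for $\rest{\cdot}$, the law $\rep\PP \equiv \PP \parop \rep\PP$, and $\alpha$-conversion, no manipulated position is guarded by an equal-name match, so $\fnn{\cdot}$ behaves there exactly as $\fn{\cdot}$ and the required equalities are the standard ones for free names (noting that the side condition $\NA \notin \fn\PP$ of scope extrusion also gives $\NA \notin \fnn\PP$, since $\fnn\PP \subseteq \fn\PP$). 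Reflexivity, symmetry and transitivity are immediate. The step carrying the real content is the closure of $\equiv$ under contexts: from $\fnn\PP = \fnn\PQ$ I must obtain $\fnn{\context[\PP]} = \fnn{\context[\PQ]}$ for every context former. This amounts to showing $\fnn{\cdot}$ is compositional, i.e.\ that $\fnn{\context[\PP]}$ depends on $\PP$ only through $\fnn\PP$, and it is here that the uniform (not merely top-level) collapsing of equal-name matches in the definition earns its keep. I expect this compositionality check to be the main obstacle of the whole lemma.

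For the second item, if $\NA \notin \fn\PP$ then $\PP' = \PP$ already works, so assume $\NA \in \fn\PP \setminus \fnn\PP$. Such an $\NA$ can occur in $\PP$ only inside guards $\match\NA\role\msg\NA$ that the match law permits to be erased; applying that law under $\equiv$ to remove those guards yields a process $\PP'$ with $\PP \equiv \PP'$ and $\NA \notin \fn{\PP'}$, the desired witness. This is short once the bookkeeping of which names $\fnn{\cdot}$ ignores is made precise.

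For the third item I would prove the slightly stronger statement that reduction does not enlarge $\fnn{\cdot}$, i.e.\ $\PP \red \PQ$ implies $\fnn\PQ \subseteq \fnn\PP$, whence the claim follows. I decompose a reduction as $\PP \equiv \PP'' \red_0 \PQ'' \equiv \PQ$, where $\red_0$ is the base reduction (communication closed under $\parop$ and $\rest{\cdot}$), which never invokes the match law. By the first item $\fnn\PP = \fnn{\PP''}$ and $\fnn\PQ = \fnn{\PQ''}$, so it suffices to show $\fnn{\PQ''} \subseteq \fnn{\PP''}$ for $\red_0$; this I verify by induction on $\red_0$, using that communication only substitutes an already-free object name for the bound input variable and discards the two prefixes, so that no new non-match free name is created, and using the compositionality of $\fnn{\cdot}$ established for the first item to push the inclusion through the $\parop$- and $\rest{\cdot}$-context rules. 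Once the first item is in hand, the second and third are routine, and the compositionality argument there remains the single delicate point.
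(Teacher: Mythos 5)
Your proposal is correct and follows essentially the same route as the paper's proof: item 1 rests on the observation that the equal-name match law is the only congruence axiom affecting free names and that $\fnn{\cdot}$ is invariant under it by definition, item 2 erases the offending $\match\NA\role\msg\NA$ guards via that law, and item 3 is an induction on the reduction derivation (your normalization of the structural rule to the top, and your explicit compositionality check for context closure, are just spelled-out versions of steps the paper leaves implicit). The only mild overstatement is calling compositionality ``the main obstacle''; it follows immediately from the structural recursiveness of the definition of $\fnn{\cdot}$.
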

\begin{proof}
\begin{enumerate}
\item The only structural congruence rule affecting free names is $\match\NA\role\msg\NA\pi.\PP\equiv\pi.\PP$, and by the definition $\fnn{\match\NA\role\msg\NA\pi.\PP}=\fnn{\pi.\PP}$.
\item Assume $\NA\notin\fnn{\PP}$. If $\NA\notin\fn{\PP}$ then the proof is finished. 
Now assume $\NA\in\fn{\PP}$. Then $\NA$ can appear only in $\PP$ in a sub-process of the form $\match\NA\role\msg\NA\pi.\PQ$. In this case we may show, by induction on the structure of $\PP$, that using the structural congruence rule $\match\NA\role\msg\NA\pi.\PQ\equiv\pi.\PQ$, we can get rid of all such matchings that mention name $\NA$.
\item Follows by an easy induction on $\red$ derivation.
\end{enumerate}
\end{proof}
\begin{lemmata}{\ref{lem:oper-corresp-with-Hs}}
If $\PP \red \PQ$ then 
$
\enc{\PP}\parop \handler \ltss{\tau}* \sim \enc{\PQ}\parop \handler,
$
where
\begin{itemize}
\item if $\fnn{\PP}=\{\NK_1, \ldots, \NK_n\}$, then 
\[
\handler=\prod\limits_{i\in\{1, \ldots, n\}} \handler_{\NK_i},
\]
\item if $\fnn{\PP}=\emptyset$ then 
$\handler=\inact$.
\end{itemize}
\end{lemmata}
\begin{proof}
The proof is by induction on $\red$ derivation.
\begin{enumerate}
\item \emph{Base case}: $\send\NK\role\msg\NL.\PP \parop \receive\NK\role\msg\NX.\PQ \red \PP \parop \PQ\subst{\NL}{\NX}$. Since $\NK$ and $\NL$ are free in the starting process, we can encode it as 
\[
\PR=\enc{\send\NK\role\msg\NL.\PP \parop \receive\NK\role\msg\NX.\PQ} \parop \handler_\NK \parop \handler_\NL \parop \handler,
\]
where 
if 
$\fnn{\send\NK\role\msg\NL.\PP \parop \receive\NK\role\msg\NX.\PQ}=\{\NK, \NL, \NK_1, \ldots, \NK_n\}$ 
then 
$\handler=\prod\limits_{i\in\{1, \ldots, n\}} \handler_{\NK_i}.$ 
If $\fnn{\send\NK\role\msg\NL.\PP \parop \receive\NK\role\msg\NX.\PQ}=\{\NK, \NL\}$ then $\handler=\inact$.
Then, 
\[
\begin{array}{l@{\,}c@{\,}l}
\PR 
& = & 
\enc{\send\NK\role\msg\NL.\PP} \parop \enc{\receive\NK\role\msg\NX.\PQ} \parop \handler_\NK \parop \handler_\NL \parop \handler\\
& = & 
\rest{\NE_1, \NE_2} \send{\NN_\NK}\role\msg{\NE_1}.\send{\NM_\NL}\role\msg{(\NE_1, \NE_2)}.\receive{\NE_2}\role\msg\NY.\send\NY\role\msg{\NE_1}.\enc{\PP} \parop \receive\NK\role\msg{(\NX, \NN_\NX, \NM_\NX, \NX')}.\receive{\NX'}\role\msg\NY.\enc{\PQ} \\
& & \parop \rep\receive{\NN_\NK}\role\msg\NY.\send\NY\role\msg\NK \parop \rep\receive{\NM_\NK}\role\msg{(\NX_1, \NX_2)}.\receive{\NX_1}\role\msg\NY. \rest\NT\send{\NY}\role\msg{(\NK, \NN_\NK, \NM_\NK, \NT)}.\send{\NX_2}\role\msg\NT  \\
& & \parop \rep\receive{\NN_\NL}\role\msg\NY.\send\NY\role\msg\NL \parop \rep\receive{\NM_\NL}\role\msg{(\NX_1, \NX_2)}.\receive{\NX_1}\role\msg\NY. \rest\NT\send{\NY}\role\msg{(\NL, \NN_\NL, \NM_\NL, \NT)}.\send{\NX_2}\role\msg\NT \parop \handler  \\
& \lts{\tau}\lts{\tau} & 
\rest{\NE_1,\NE_2} (\receive{\NE_2}\role\msg\NY.\send\NY\role\msg{\NE_1}.\enc{\PP} \parop \receive\NK\role\msg{(\NX, \NN_\NX, \NM_\NX, \NX')}.\receive{\NX'}\role\msg\NY.\enc{\PQ} \\
& & \parop \send{\NE_1}\role\msg\NK
\parop \handler_\NK  \\
& & \parop \rep\receive{\NN_\NL}\role\msg\NY.\send\NY\role\msg\NL 
\parop \receive{\NE_1}\role\msg\NY. \rest{\NT'}\send{\NY}\role\msg{(\NL, \NN_\NL, \NM_\NL, {\NT'})}.\send{\NE_2}\role\msg{\NT'}) \parop \rep\receive{\NM_\NL}\role\msg{(\NX_1, \NX_2)}.\receive{\NX_1}\role\msg\NY. \rest\NT\send{\NY}\role\msg{(\NL, \NN_\NL, \NM_\NL, \NT)}.\send{\NX_2}\role\msg\NT \parop \handler,  \\
\end{array}
\]
where the output process synchronize with the left thread of the handler of name $\NK$ and 
with the right thread of the handler of name $\NL$. 
At this point, the two handlers can synchronize and the last process evolves to
\[
\begin{array}{l@{\,}c@{\,}l}
& \lts{\tau}\lts{\tau}\lts{\tau} & 
\rest{\NE_2, \NE_1, \NT'}(\send{\NT'}\role\msg{\NE_1}.\enc{\PP} 
\parop \receive{\NT'}\role\msg\NY.\enc{\PQ}\subst{\NL}{\NX}\subst{\NN_\NL}{\NN_\NX}\subst{\NM_\NL}{\NM_\NX} \\
& & \parop \inact 
\parop \handler_\NK \\
& & \parop \rep\receive{\NN_\NL}\role\msg\NY.\send\NY\role\msg\NL 
\parop  \inact) \parop \rep\receive{\NM_\NL}\role\msg{(\NX_1, \NX_2)}.\receive{\NX_1}\role\msg\NY. \rest\NT\send{\NY}\role\msg{(\NL, \NN_\NL, \NM_\NL, \NT)}.\send{\NX_2}\role\msg\NT \parop \handler,  \\
\end{array}
\]
where, after the synchronization of the two handlers, name $\NL$ (together with $\NN_\NL$, $\NM_\NL$ and $\NT'$) 
is finally received in the input process, after which channel $\NT'$ is also received in the left-hand side process, 
making the encoding of  processes $\PP$ an $\PQ$ only unlocked in the synchronization:
\begin{equation}\label{eq-1}
\begin{array}{l@{\,}c@{\,}l}
& \lts{\tau} & 
\rest{\NE_2, \NE_1, \NT'}(\enc{\PP} 
\parop \enc{\PQ}\subst{\NL}{\NX}\subst{\NN_\NL}{\NN_\NX}\subst{\NM_\NL}{\NM_\NX} \\
& & \parop \inact 
\parop \handler_\NK  \\
& & \parop \rep\receive{\NN_\NL}\role\msg\NY.\send\NY\role\msg\NL 
\parop  \inact) \parop \rep\receive{\NM_\NL}\role\msg{(\NX_1, \NX_2)}.\receive{\NX_1}\role\msg\NY. \rest\NT\send{\NY}\role\msg{(\NL, \NN_\NL, \NM_\NL, \NT)}.\send{\NX_2}\role\msg\NT \parop \handler.  \\
\end{array}
\end{equation}
By Corollary~\ref{lem:encode-subst} we get 
\[
\enc{\PQ}\subst{\NL}{\NX}\subst{\NN_\NL}{\NN_\NX}\subst{\NM_\NL}{\NM_\NX}=
\enc{\PQ\subst{\NL}{\NX}}.
\] 
Hence, we conclude the last derived process in equation~(\ref{eq-1}) is equal to 
\[
\begin{array}{l@{\,}c@{\,}l}
&  & 
\rest{\NE_2, \NE_1, \NT'}(\enc{\PP} 
\parop \enc{\PQ\subst{\NL}{\NX}} \\
& & \parop \inact 
\parop \handler_\NK  \\
& & \parop \rep\receive{\NN_\NL}\role\msg\NY.\send\NY\role\msg\NL 
\parop  \inact) \parop \rep\receive{\NM_\NL}\role\msg{(\NX_1, \NX_2)}.\receive{\NX_1}\role\msg\NY. \rest\NT\send{\NY}\role\msg{(\NL, \NN_\NL, \NM_\NL, \NT)}.\send{\NX_2}\role\msg\NT \parop \handler.  \\
\end{array}
\]
Since $\NE_2, \NE_1,\NT' \notin \fn{\enc{\PP} 
\parop \enc{\PQ\subst{\NL}{\NX}}
\parop \inact \parop \handler_\NK
\parop \rep\receive{\NN_\NL}\role\msg\NY.\send\NY\role\msg\NL 
\parop  \inact}$, by Proposition~\ref{prop:bisim-standard-props} we have that 
the last derived process is strongly bisimilar to
\[
\begin{array}{l@{\,}c@{\,}l}
&  & 
\enc{\PP} 
\parop \enc{\PQ\subst{\NL}{\NX}}
\parop \rest{\NE_2} \rest{\NE_1} \rest{\NT}\inact 
\parop \handler_\NK 
\parop \handler_\NL 
\parop 
\handler  \\
& \sim & 
\enc{\PP} 
\parop \enc{\PQ\subst{\NL}{\NX}}
\parop \handler_\NK 
\parop \handler_\NL 
\parop 
\handler  \\
& = & 
\enc{\PP \parop \PQ\subst{\NL}{\NX}}
\parop \handler_\NK
\parop \handler_\NL \parop \handler.  \\
\end{array}
\]

\item  $\PP \parop \PR \red \PQ \parop \PR$ is derived from $\PP\red\PQ$. 
By induction hypothesis  
 
\[
\enc{\PP}\parop \handler_1 \ltss{\tau}* \PS,
\] 
where $\PS \sim \enc{\PQ}\parop \handler_1$ and  
if $\fnn{\PP}=\{\NK_1, \ldots, \NK_n\}$ then 

\[
\handler_1=\prod\limits_{i\in\{1, \ldots, n\}} \handler_{\NK_i},
\] 
and if $\fnn{\PP}=\emptyset$ then $\handler_1=\inact$.
Now, if $\fnn{\PR}\setminus\fnn{\PP}=\{\NL_1, \ldots, \NL_m\}$, 
let us take 
\[
\handler_2=\prod\limits_{j\in\{1, \ldots, m\}} \handler_{\NL_j}.
\]
If $\fnn{\PR}\setminus\fnn{\PP}=\emptyset$ let us take $\handler_2=\inact$.
%
Then, by \rulename{(par-l)} we can derive 
\[
\enc{\PP}\parop \handler_1 \parop \enc{\PR} \parop \handler_2 
\ltss{\tau}* 
\PS  \parop \enc{\PR}  \parop \handler_2.
\]

By Lemma~\ref{prop:bisim-standard-props} we get 
$\enc{\PP}\parop \handler_1 \parop \enc{\PR}\parop \handler_2 
\sim 
\enc{\PP} \parop \enc{\PR} \parop \handler_1 \parop \handler_2$
then 

\[
\enc{\PP \parop \PR} \parop \handler_1 \parop \handler_2 = \enc{\PP} \parop \enc{\PR} \parop \handler_1 \parop \handler_2 
\lts{\tau}^* \PS',
\]
where $\PS'  \sim  \PS  \parop \enc{\PR}  \parop \handler_2$, by the definition of strong bisimilarity. 
We can now conclude 
\[
\begin{array}{l@{\;}c@{\;}l}
\PS' & \sim & \PS  \parop \enc{\PR}  \parop \handler_2\\
      & \sim & \enc{\PQ}\parop \handler_1 \parop \enc{\PR}  \parop \handler_2\\
      & \sim & \enc{\PQ}  \parop \enc{\PR} \parop \handler_1  \parop \handler_2 \\
    & = & \enc{\PQ \parop \PR} \parop \handler_1  \parop \handler_2.\\
\end{array} 
\]

\item  $\rest\NK \PP \red \rest\NK \PQ$ is derived from $\PP\red\PQ$. Again, by induction hypothesis 
\[
\enc{\PP}\parop \handler_1 \ltss{\tau}* \PS,
\] 
where $\PS \sim \enc{\PQ}\parop \handler_1$ and if $\fnn{\PP}=\{\NK_1, \ldots, \NK_n\}$ then

\[
\handler_1=\prod\limits_{i\in\{1, \ldots, n\}} \handler_{\NK_i},
\] 
while if $\fnn\PP=\emptyset$ then $\handler_1=\inact$.
Since, 
\[
\enc{\rest\NK \PP} \parop \handler=\rest{\NK, \NN_\NK, \NM_\NK} (\enc{\PP} \parop \handler_{\NK})\parop \handler,
\] 
we distinguish two cases:
	\begin{enumerate}
	\item if $\NK\in\fnn{\PP}$ then 
	$\handler_\NK\parop \handler=\handler_1$.
	Since $\NK, \NN_\NK, \NM_\NK \notin\fn{\handler}$, by Proposition~\ref{prop:bisim-standard-props} we get
	\[
	\begin{array}{lcl}
	\rest{\NK, \NN_\NK, \NM_\NK} (\enc{\PP} \parop \handler_\NK)\parop \handler
	& \sim & 
	\rest{\NK, \NN_\NK, \NM_\NK} (\enc{\PP}  \parop \handler_\NK \parop \handler)\\
	& \ltss{\tau}*  & 
	\rest{\NK, \NN_\NK, \NM_\NK} \PS,\\
	\end{array}
	\]
	where $\ltss{\tau}*$ transition(s) follows by the induction hypothesis and rule \rulename{(res)}. 
	By Proposition~\ref{prop:bisim-standard-props} 
	\[
	\begin{array}{lcl}
	\rest{\NK, \NN_\NK, \NM_\NK} \PS 
	& \sim & 
	\rest{\NK, \NN_\NK, \NM_\NK} (\enc{\PQ}\parop \handler_1  )\\
	& = & 
	\rest{\NK, \NN_\NK, \NM_\NK} (\enc{\PQ}\parop \handler_\NK \parop \handler  )\\
	& \sim &
	\rest{\NK, \NN_\NK, \NM_\NK} (\enc{\PQ}\parop \handler_\NK)\parop \handler  \\
	& = & 
	\enc{\rest\NK\PQ}\parop \handler,
	\end{array} 
	\]
	and by definition and transitivity of strong bisimilarity we get 
	\[
	\enc{\rest\NK \PP} \parop \handler \ltss{\tau}* 
	\sim \enc{\rest\NK\PQ}\parop \handler.
	\]

	\item if $\NK\notin\fnn{\PP}$, then 
	$\handler= \handler_1$. By Lemma~\ref{fnn} there exist $\PP'$ such that $\PP\equiv\PP'$ and $\NK\notin\fn{\PP'}$. 
	Since $\equiv$ is a congruence and by Lemma~\ref{lem:encode-struct-to-bisim} we get $\enc{\PP}\sim\enc{ \PP'}$ and $\enc{\rest\NK \PP}\sim\enc{\rest\NK \PP'}$. Then, by definition of the encoding and Proposition~\ref{prop:bisim-standard-props} we have 
	\[
	\begin{array}{lcl} 
	\enc{\rest\NK \PP} \parop \handler_1
	& \sim & 
	\enc{\rest\NK \PP'} \parop \handler_1\\
	& = &
	\rest{\NK, \NN_\NK, \NM_\NK} (\enc{\PP'} \parop \handler_\NK)\parop \handler_1\\
	& \sim & 
	\enc{\PP'} \parop \rest{\NK, \NN_\NK, \NM_\NK} \handler_\NK \parop \handler_1\\
	& \sim & 
	\enc{\PP'} \parop \handler_1 \sim \enc{\PP} \parop \handler_1.\\
	\end{array}
	\]
	Since $\PP\red \PQ$ and $\NK\notin\fnn{\PP}$, by Lemma~\ref{fnn} we get $\NK\notin\fnn{\PQ}$.  By the same lemma we conclude there exist $\PQ'$ such that $\PQ\equiv\PQ'$ and $\NK\notin\fn{\PQ'}$. 
	Hence, again 
	\[
	\begin{array}{lcl}
	\enc{\PQ}\parop \handler_1
	& \sim &
	\enc{\PQ'}\parop \handler_1\\
	& \sim & \enc{\PQ'}\parop \rest{\NK, \NN_\NK, \NM_\NK}\handler_\NK \parop\handler_1 \\
	& \sim & \rest{\NK, \NN_\NK, \NM_\NK}(\enc{\PQ'}
	\parop \handler_\NK) \parop\handler_1 \\ 
	& = & \enc{\rest\NK\PQ'} \parop \handler_1 \sim \enc{\rest\NK\PQ} \parop \handler_1. \\  
	\end{array}
	\]
	By definition and transitivity of strong bisimilarity we get 
	\[
	\enc{\rest\NK \PP} \parop \handler_1 \ltss{\tau}* 
	\sim \enc{\rest\NK\PQ}\parop \handler_1.
	\]
	\end{enumerate}
\item $\PP'\red\PQ'$ is derived from $\PP\red\PQ$, where $\PP\equiv\PP'$ and $\PQ\equiv\PQ'$. 
By induction hypothesis
\[
\enc{\PP}\parop \handler_1 \ltss{\tau}* \PS,
\] 
where $\PS \sim \enc{\PQ}\parop \handler_1$, and if $\fnn{\PP}=\{\NK_1, \ldots, \NK_n\}$ then
\[
\handler_1=\prod\limits_{i\in\{1, \ldots, n\}} \handler_{\NK_i},
\] 
while if $\fnn\PP=\emptyset$ then $\handler_1=\inact$.
By Lemma~\ref{lem:encode-struct-to-bisim} and Lemma~\ref{fnn}, $\PP\equiv\PP'$ implies 
$\enc{\PP}\sim\enc{\PP'}$ and $\fnn{\PP}=\fnn{\PP'$}, and $\PQ\equiv\PQ'$ implies $\enc{\PQ}\sim\enc{\PQ'}$. 
Then, by Proposition~\ref{prop:bisim-standard-props} we get 
$\enc{\PP}\parop\handler_1 \sim \enc{\PP'}\parop \handler_1$, hence, 
by definition of strong bisimilarity 
\[
\enc{\PP'}\parop \handler_1 \ltss{\tau}* \PS',
\] 
where 
$\PS' \sim \PS \sim \enc{\PQ}\parop \handler_1 \sim \enc{\PQ'}\parop \handler_1$, 
which completes the proof.
\end{enumerate}
\end{proof}

As a direct consequence of Lemma~\ref{lem:oper-corresp-with-Hs}, we get the operational correspondence result for the encoding of closed $\pi$-calculus processes.

\begin{theorema}{\ref{theorem:operational-corresp}}
Let $\PP$ be a closed (sum-free) $\pi$-calculus process. 
If $\PP\red\PQ$ then $\enc{\PP}\ltss{\tau}* \sim\enc{\PQ}$.
\end{theorema}

\end{document}